\newtheorem{lemma}{Lemma}
\newtheorem{definition}{Definition}
\newtheorem{example}{Example}
\title{An Adaptive Modulation Scheme for Two-user Fading MAC with Quantized Fade State Feedback}
\begin{document}

\author{
\authorblockN{Sudipta Kundu and B. Sundar Rajan }
\authorblockA{Dept. of ECE, IISc, Bangalore 560012, India, Email: {$\lbrace$sudiptak, bsrajan$\rbrace$} @ece.iisc.ernet.in
}
}

\maketitle
\thispagestyle{empty}	
\begin{abstract}
With no CSI at the users, transmission over the two-user Gaussian Multiple Access Channel with fading and finite constellation at the input, is not efficient because error rates will be high when the channel conditions are poor. However, perfect CSI at the users is an unrealistic assumption in the wireless scenario, as it would involve massive feedback overheads. In this paper we propose a scheme which uses only quantized knowledge of CSI at the transmitters with the overhead being nominal. The users rotate their constellation without varying their transmit power to adapt to the existing channel conditions, in order to meet certain pre-determined  minimum Euclidean distance requirement in the equivalent constellation at the destination. The optimal modulation scheme has been described for the case when both the users use symmetric $M$-PSK constellations at the input, where $ M=2^\lambda $, $ \lambda $ being a positive integer. The strategy has been illustrated by considering examples where both users use QPSK or 8-PSK signal sets at the input. It is shown that the proposed scheme has better throughput and error performance compared to the conventional non-adaptive scheme, at the cost of a feedback overhead of just $\left\lceil \log _2 \left(\frac{M^2}{8}-\frac{M}{4}+2\right)\right\rceil + 1 $ bits, for the $M$-PSK case.  
\end{abstract}

\section{INTRODUCTION}

A multiple access channel (MAC) consists of multiple users transmitting independent information to a common destination. There is no cooperation among the users. The capacity region for a discrete memoryless MAC is well known \cite{Cover} \cite{Gallager}. 
For a two-user MAC with additive white Gaussian noise (AWGN) the capacity achieving input is the continuous Gaussian alphabet. The two-user Gaussian MAC with finite input constellations like $M$-QAM, $M$-PSK was studied in \cite{HarshanCR} \cite{HarshanCPA}. It was shown that relative rotation between input constellations \cite{HarshanCR}, or a constellation power allocation scheme \cite{HarshanCPA} may be employed to maximize the constellation constrained (CC) capacity regions. Trellis based coding schemes were also suggested to achieve any rate pair within the CC capacity region.

In this paper, a two-user MAC with quasi-static fading is considered, as shown in Fig. \ref{fig:mac}. The two users transmit information to a common destination. The random variables $h_1$ and $h_2$ are  the channel gains for User-1 and User-2 respectively and $h_1$,$h_2$ $\sim \mathcal{CN}(0,1)$, where $ \mathcal{CN}(0,s) $ denotes the circular symmetric complex Gaussian random variable with variance $ s $.  AWGN $ z $ gets added to the received signal at the destination, $z$ $\sim \mathcal{CN}(0,\sigma^2)$. User-$i$ transmits a symbol $ x_i $ from a complex finite constellation $ \mathcal{S}_i $ (like $M$-QAM or $M$-PSK) of unit average energy, \emph{i.e}, $\mathbb{E} [\mid x_i \mid ^2]=1$. Let $ P $ be the average power constraint for each user. The received signal at the destination is thus represented by 
\begin{align}
\nonumber
y=\sqrt{P}h_1x_1+\sqrt{P}h_2x_2+z.
\end{align}
We assume that perfect CSI \emph{i.e.} the tuple $ (h_1,h_2) $ is available only at the destination.  \begin{figure}[t]
\centering
\includegraphics[totalheight=1.5in,width=3in]{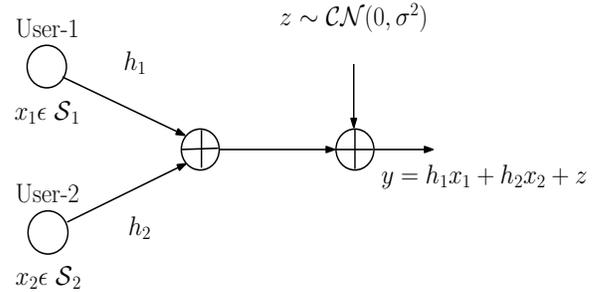}
\caption{Two-user fading MAC with Gaussian noise}	
\label{fig:mac}	
\end{figure}
At the destination the system can be viewed as a single user AWGN channel with the symbols drawn from a sum constellation 
\begin{align}
\nonumber
\mathcal{S}_{\text{sum}} &= \sqrt{P} h_1\mathcal{S}_1+ \sqrt{P} h_2\mathcal{S}_2\\
\nonumber
&= \sqrt{P} h_1(\mathcal{S}_1+\dfrac{h_2}{h_1}\mathcal{S}_2)\\
\label{eqneff}
&= \sqrt{P} h_1\underbrace{(\mathcal{S}_1+\gamma e^{j\theta}\mathcal{S}_2)}_{\mbox{$\mathcal{S}_{\text{eff}}$}},
\end{align}
where $\gamma= | \frac{h_2}{h_1}|$, $\theta = \angle \frac{h_2}{h_1}$ and $\mathcal{S}_{\text{eff}}$ denotes the effective constellation.

\vspace{0.2cm}Without loss of generality it can be assumed that $ \gamma \geq 1 $, as destination has knowledge of both $h_1$ and $h_2$ separately. If $\vert \frac{h_2}{h_1}\vert < 1$, then at the destination the ratio can be simply reversed to compute  $\frac{h_1}{h_2}$. Which one among the two ratios is calculated is made known to the users via a single bit of feedback. For the rest of the paper, we assume that the ratio $\frac{h_2}{h_1}$ is calculated at the destination. However, the results obtained still hold when the ratio calculated is $\frac{h_1}{h_2}$, by interchanging the roles of User-1 and User-2. For the rest of the paper a $M$-PSK constellation refers to a symmetric PSK signal set, with $ M=2^\lambda $, $ \lambda $ being a positive integer. The points in the  $ M $-PSK signal set are of the form $ e^{j\frac{(k-1)2\pi}{M}} $, where $ 1\leq k \leq M $. We assume that $ \mathcal{S}_1 = \mathcal{S}_2= \mathcal{S}$, where $\mathcal{S}$ is an $M$-PSK constellation.  We refer to the pair $ (\gamma , \theta)$ to represent $\gamma e^{j \theta}$ and  call it the fade state throughout the paper. We refer to the complex plane that represents $ \gamma e^{j \theta } $ with $ \gamma \geq 1 $ as the $ (\Gamma, \Theta) $ plane.

Perfect channel state information (CSI) is available at the destination only, which quantizes the $ (\Gamma , \Theta) $ plane into finite number of regions. The quantization obtained is similar to the one used for physical layer network coding in \cite{Akino}, which was subsequently derived analytically in \cite{Vijay}.  This quantized knowledge of the fade state is made available to the users to adapt their modulation scheme via rotation of constellations to compensate for the possibly bad channel conditions. MAC with limited channel state information at transmitter (CSIT) has been studied from an information theoretic point of view in \cite{Reza},\cite{Cemal}. In \cite{Wiese}, it was shown that for a two-user discrete memoryless MAC with additional common message, finer CSIT results in increasing the capacity region. To the best of our knowledge, explicit modulation schemes with finite constellations and quantized fade state feedback has not been reported before.

The contributions and organization of this paper are as follows:
\begin{itemize}
\item
A quantization of the $ (\Gamma , \Theta) $ plane is derived, for the case when both users use $M$-PSK constellations at the input. We illustrate the quantization procedure by taking examples of the QPSK and 8-PSK case. (Section \ref{sec:chan_quan})
\item
A modulation scheme is proposed for the users, which adapts according to the quantized feedback about the  fade state that they receive from the destination, in order to satisfy a certain minimum distance guarantee $\delta$ in $\mathcal{S}_{\text{eff}}$ given in \eqref{eqneff}. The fade states which leads to violation of this minimum distance guarantee have been identified. Adaptation involves rotation of the constellation of one user relative to the other, without any change in transmit power, in order to effectively avoid these bad channel conditions. (Section \ref{sec:adapt})
\item
The procedure to obtain the optimal angles for rotation is stated for the $M$-PSK case. The optimal rotation angles are calculated in closed form for the QPSK and 8-PSK case. (Section \ref{sec:optimal})
\item
An upper bound on $\delta$, \textit{i.e.}, the maximum value of the minimum distance in the effective constellation that can be guaranteed, is derived. (Section \ref{sec:upper})
\item
Simulation results  are presented to show the extent to which the proposed strategy outperforms   the conventional transmission scheme without adaptation. (Section \ref{sec:results})
\end{itemize}
 
\section{CHANNEL QUANTIZATION FOR $ M $-PSK SIGNAL SETS}
In this section we obtain a quantization of the $ (\Gamma, \Theta) $ plane into finite number of regions at the destination.  

\subsection{Distance Distribution in the effective constellation} \label{sec:quandist}
Without loss of generality we assume that the average power constraint of each user is $ P=1 $. It is known that the error performance for an AWGN channel is determined by the Euclidean distance distribution of the input constellation. In our case, the distance distribution of  $ \mathcal{S}_{\text{sum}} $ decides the error performance at the destination.  For any value of $  (\gamma,\theta) $,   $d^{'2}_{(s_1,s_2)_{\text{sum}}\leftrightarrow (s'_1, s'_2)_{\text{sum}}}$ denotes the distance  between the two points $(s_1,s_2)_{\text{sum}}$ and $(s'_1,s'_2)_{\text{sum}}$, where $ (s_1,s_2)_{\text{sum}} $, $(s'_1,s'_2)_{\text{sum}}$ $\in \mathcal{S}_{\text{sum}}$ refer to the points $ \sqrt{P}h_1(s_1+\gamma e^{j \theta} s_2) $  and $\sqrt{P}h_1(s'_1+\gamma e^{j \theta} s'_2)$ respectively with $ s_1,s_2,s'_1,s'_2 \in \mathcal{S} $. It is given by
\begin{align}
\nonumber
d ^{'2}_{(s_1,s_2)_{\text{sum}}\leftrightarrow (s'_1, s'_2)_{\text{sum}}} &= P \vert h_1\vert ^2\vert(s_1-s'_1 )+ \gamma e^{j\theta } (s_2-s'_2) \vert ^2 \\
\label{eqn:sumdistance}
&= P \vert h_1 \vert ^2 d^2_{(s_1,s_2)\leftrightarrow (s'_1, s'_2)},
\end{align}
where \eqref{eqn:sumdistance}, $d^2_{(s_1,s_2)\leftrightarrow (s'_1, s'_2)}$ denotes the distance between the points $(s_1,s_2)$ and $(s'_1,s'_2)$, where $ (s_1,s_2), (s'_1,s'_2) $ refers to the points $ s_1+\gamma e^{j \theta} s_2 $ and $s'_1+\gamma e^{j \theta} s'_2$ in $ \mathcal{S}_{\text{eff}}  $. 
Since $ P \vert h_1 \vert ^2  $ simply scales the distances in $ \mathcal{S}_{\text{eff}} $ we can focus only on
\begin{align}
\label{eqn:dist}
d^2_{(s_1,s_2)\leftrightarrow (s'_1, s'_2)}= \vert(s_1-s'_1 )+ \gamma e^{j\theta } (s_2-s'_2) \vert ^2
\end{align}
as the quantity of interest.

It is clear from \eqref{eqn:dist} that for certain values of $ (\gamma, \theta) $ the distance between points $(s_1,s_2)$ and $(s'_1,s'_2)$ in $\mathcal{S}_{\text{eff}}$ reduces to zero, 
\textit{i.e.} if 
\begin{align}
\label{eqn:sing}
 \gamma e^{j \theta} = -\frac{(s_1-s'_1)}{(s_2-s'_2)}
\end{align}
\noindent then $ d^2_{(s_1,s_2) \leftrightarrow (s_1\prime, s_2\prime) }=0 $. These values of $ (\gamma, \theta) $ are called the \emph{singular fade states} \cite{Akino}, \cite{Vijay}. Singular fade states can also be defined as follows:
\vspace{2pt}
\begin{definition}\label{def:sing}
A fade state $ (\gamma, \theta) $ is said to be a \textit{singular fade state} if  $ \vert \mathcal{S}_{\text{eff}} \vert < M^2 $.
\end{definition}

Clearly, $ \gamma e^{j \theta}=0 $ is a singular fade state, for any arbitrary signal set $ \mathcal{S} $. For any input constellation $\mathcal{S}$ the other non-zero singular fade states are obtained using \eqref{eqn:sing}. For a given input constellation $ \mathcal{S} $, let $ \mathcal{H} $ denote the set of all singular fade states.
\vspace{4pt}
\begin{example}
When $\mathcal{S}_1=\mathcal{S}_2=\mathcal{S}$, where $\mathcal{S}$ is a QPSK constellation, then the non-zero singular fade states are at
\begin{align}
\nonumber
\gamma &= \sqrt{2} , \hspace{5pt} \theta = 45^\circ , 135^\circ ,225^\circ , 315^\circ \\
\nonumber
\gamma &= 1, \hspace{13pt} \theta = 0^\circ , 90^\circ , 180^\circ , 270^\circ\\
\nonumber
\gamma &= \frac{1}{\sqrt{2}}, \,\theta = 45^\circ , 135^\circ ,225^\circ , 315^\circ
\end{align}
\end{example}

Since for an AWGN channel the error performance at the destination is dominated by the minimum distance of the input constellation, it is sufficient to study the minimum distance of $ \mathcal{S}_{\text{eff}} $. Also from Definition \ref{def:sing} minimum distance in $ \mathcal{S}_{eff} $ reduces to zero at the singular fade states. 
The following lemma provides an upper bound on the minimum distance of the effective constellation $ \mathcal{S}_{\text{eff}} $.\vspace{5pt}
\begin{lemma}
\label{lem:uppermin}
When both the users use any arbitrary signal set $\mathcal{S}$ (which includes $M$-PSK, $M$-QAM) at the input, then for any fade state $ (\gamma, \theta) $, the minimum distance $ d_{min}(\gamma , \theta) $ between any two points in $ \mathcal{S}_{\text{eff}} $ is upper bounded by the minimum distance in the input constellation $ d_{min} (\mathcal{S}) $.
\end{lemma}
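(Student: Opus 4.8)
The plan is to prove this upper bound by exhibiting a single explicit pair of points in $\mathcal{S}_{\text{eff}}$ whose mutual distance equals $d_{min}(\mathcal{S})$; since $d_{min}(\gamma,\theta)$ is the \emph{minimum} over all distinct pairs, producing one pair at distance $d_{min}(\mathcal{S})$ immediately yields $d_{min}(\gamma,\theta)\le d_{min}(\mathcal{S})$. The whole argument rests on the observation that the fading term $\gamma e^{j\theta}(s_2-s_2')$ in the distance expression \eqref{eqn:dist} can be forced to vanish by choosing the two codeword pairs to differ only in the first user's symbol.

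Concretely, I would first let $s,s'\in\mathcal{S}$ be a pair of distinct points achieving the input minimum distance, i.e.\ $|s-s'|=d_{min}(\mathcal{S})$, and fix an arbitrary reference symbol $s_0\in\mathcal{S}$ for the second user. I would then consider the two points of $\mathcal{S}_{\text{eff}}$ indexed by $(s_1,s_2)=(s,s_0)$ and $(s_1',s_2')=(s',s_0)$, that is, User-1 sends $s$ versus $s'$ while User-2 transmits the common symbol $s_0$ in both cases. Substituting into \eqref{eqn:dist} gives
\begin{align}
\nonumber
d^2_{(s,s_0)\leftrightarrow(s',s_0)}=\bigl|(s-s')+\gamma e^{j\theta}(s_0-s_0)\bigr|^2=|s-s'|^2=d_{min}^2(\mathcal{S}),
\end{align}
so this pair sits at distance exactly $d_{min}(\mathcal{S})$ for \emph{every} fade state $(\gamma,\theta)$. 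Because $s\ne s'$, the two effective points differ by $s-s'\ne 0$ and are therefore genuinely distinct points of $\mathcal{S}_{\text{eff}}$, which is what is needed for them to contribute a valid inter-point distance. Hence $d_{min}(\gamma,\theta)$, being the minimum over all such distances, is at most $d_{min}(\mathcal{S})$, completing the argument.

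There is no substantive obstacle in this proof; the only point requiring care is the choice of \emph{which} user's symbol to vary. I deliberately vary User-1's symbol while holding User-2's symbol fixed, so that the coefficient $\gamma e^{j\theta}$ multiplies a zero difference and drops out. Had I instead varied User-2's symbol at a fixed User-1 symbol, the distance would have been $\gamma\,d_{min}(\mathcal{S})\ge d_{min}(\mathcal{S})$ (using $\gamma\ge1$), which is useless for an \emph{upper} bound. Note also that the argument uses nothing about the structure of $\mathcal{S}$ beyond its being a finite constellation, which is consistent with the lemma's claim that it holds for arbitrary $\mathcal{S}$ (including $M$-PSK and $M$-QAM), and that the bound is independent of $(\gamma,\theta)$.
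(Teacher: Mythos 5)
Your proof is correct and is essentially the paper's own argument: the paper bounds $d_{min}^2(\gamma,\theta)$ by restricting the minimization to pairs that differ only in User-1's symbol (so the term $\gamma e^{j\theta}(s_2-s_2')$ vanishes), which is exactly your witness pair $(s,s_0)$ versus $(s',s_0)$. The only cosmetic difference is that the paper also records the complementary bound $\gamma^2 d_{min}^2(\mathcal{S})$ from pairs differing only in User-2's symbol and then invokes $\gamma\geq 1$ to take the minimum of the two, a step your direct single-witness argument renders unnecessary.
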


\begin{proof}
From the definition of $d_{min}^2(\gamma ,\theta)$, we have,
{\small
\begin{align}
\nonumber
d_{min}^2(\gamma ,\theta) &=\hspace{-0.5cm}\min_{(s_1,s_2) \neq (s'_1,s'_2) \in \mathcal{S}^2} \vert (s_1-s'_1)+\gamma e^{j \theta} (s_2-s'_2)\vert ^2\\
\nonumber
&\leq \hspace{-0.5cm}\min_{(s_1,s_2) \neq (s'_1,s'_2) \in \mathcal{S}^2} \left\lbrace \vert s_1-s'_1\vert ^2 +\gamma ^2 \vert s_2-s'_2\vert ^2 \right\rbrace 
\end{align}
\begin{align}
\label{Lemma1_1}
\text{Now}\hspace{5pt} d_{min}^2(\gamma ,\theta) &\leq \min_{s_1 \neq s'_1 \in \mathcal{S}}\vert s_1-s'_1\vert ^2 =d_{min}^2(\mathcal{S}).\\
\label{Lemma1_2}
\text{Also} \hspace{5pt} d_{min}^2(\gamma ,\theta) &\leq\min_{s_2 \neq s'_2 \in \mathcal{S}}\gamma ^2\vert s_2-s'_2\vert ^2=\gamma ^2 d_{min}^2 (\mathcal{S}). 
\end{align}
\begin{align}
\nonumber
\text{From \eqref{Lemma1_1} and \eqref{Lemma1_2}, and using the fact that $\gamma \geq 1$, we have} \\
\nonumber
 d_{min}^2(\gamma ,\theta) \leq \min \lbrace d_{min}^2(\mathcal{S}),\gamma ^2 d_{min}^2(\mathcal{S})\rbrace =d_{min}^2(\mathcal{S}). 
\end{align}
}
\end{proof}

\vspace{3pt}
In the following lemma, it is proved that in order to study the distance profile in $ \mathcal{S}_{\text{eff}} $ it is sufficient to consider $\theta \in \left[ 0,\pi /M\right] $ when both users use $M$-PSK signal sets. Distance profiles for other values of $\theta$ can be obtained from $\theta \in [0, \pi/M] $. We use the term wedge $[\theta_1, \theta _2]$ to denote the region $\gamma \geq 1$ and $\theta \in [\theta_1, \theta_2]$ on the $ (\Gamma , \Theta) $ plane. The lines $ \theta= \theta_1 $ and $ \theta= \theta _2 $ for $ \gamma \geq 1 $ and the arc $ \gamma=1 $ for $ \theta \in [\theta _1, \theta _2] $ form the boundary of the wedge $ [\theta _1 , \theta _2] $.
\vspace{5pt}
 
\begin{lemma}
\label{lemmaangle}
To study the distance profile in $ \mathcal{S}_{\text{eff}} $ when both the users use  $M$-PSK constellations, it is sufficient to consider the case $ 0 \leq \theta \leq \pi /M $. All other cases can be obtained from this.
\end{lemma}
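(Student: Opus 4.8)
The plan is to exploit two structural symmetries of the $M$-PSK constellation $\mathcal{S}$ — invariance under rotation by $2\pi/M$ and invariance under complex conjugation — to show that the entire distance profile of $\mathcal{S}_{\text{eff}}$ is unchanged (as a multiset of squared distances) when $\theta$ is replaced by $\theta + 2\pi/M$ or by $-\theta$. Since $d_{min}(\gamma,\theta)$ is read off from this profile, establishing invariance of the profile is more than enough. Throughout, $\gamma$ is held fixed and only the phase $\theta$ is acted upon, so $\gamma$ is a mere spectator in the argument.

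First I would establish periodicity in $\theta$ with period $2\pi/M$. The set $\mathcal{S}$ is closed under multiplication by $\omega := e^{j2\pi/M}$, since this map permutes the points $e^{j\frac{(k-1)2\pi}{M}}$ cyclically. Consequently the substitution $s_2 \mapsto \omega s_2$, $s_2' \mapsto \omega s_2'$ is a bijection of $\mathcal{S}^2$ onto itself, and from \eqref{eqn:dist} it carries the squared distance evaluated at phase $\theta + 2\pi/M$ to the squared distance evaluated at phase $\theta$ with the same $s_1,s_1'$. Equivalently, $\gamma e^{j(\theta + 2\pi/M)}\mathcal{S} = \gamma e^{j\theta}(\omega \mathcal{S}) = \gamma e^{j\theta}\mathcal{S}$, so $\mathcal{S}_{\text{eff}}$ is literally the same set at $\theta$ and at $\theta + 2\pi/M$. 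This collapses the study to $\theta \in [0, 2\pi/M)$.

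Next I would use conjugation to fold this interval in half. Because $\mathcal{S}$ is symmetric about the real axis, the map $s \mapsto \bar{s}$ is a bijection of $\mathcal{S}$ onto itself. Taking the modulus of the conjugate of the quantity inside \eqref{eqn:dist},
\begin{align}
\nonumber
d^2_{(s_1,s_2)\leftrightarrow(s_1',s_2')}
&= \left| \overline{(s_1 - s_1') + \gamma e^{j\theta}(s_2 - s_2')}\right|^2 \\
\nonumber
&= \left| (\bar{s}_1 - \bar{s}_1') + \gamma e^{-j\theta}(\bar{s}_2 - \bar{s}_2')\right|^2,
\end{align}
and as $(s_1,s_2,s_1',s_2')$ ranges over $\mathcal{S}^4$ so does $(\bar{s}_1,\bar{s}_2,\bar{s}_1',\bar{s}_2')$. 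Hence the distance profile at phase $\theta$ coincides with that at phase $-\theta$. Composing this reflection with the period-$2\pi/M$ shift of the previous step yields the reflection $\theta \mapsto 2\pi/M - \theta$ about $\theta = \pi/M$, which maps $[\pi/M,\, 2\pi/M]$ onto $[0,\, \pi/M]$ while preserving the profile. Together the two symmetries reduce the study to $0 \le \theta \le \pi/M$, as claimed.

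I do not anticipate a serious obstacle; the single point requiring care is to read ``distance profile'' as a multiset and to verify that each symmetry induces a genuine bijection of the label set $\mathcal{S}^4$ (equivalently of the labelled pairs of points of $\mathcal{S}_{\text{eff}}$), so that the two profiles agree term by term rather than merely sharing a minimum value. Once these bijections are exhibited, equality of the full profiles — and in particular of $d_{min}(\gamma,\theta)$ — is immediate, and the distance profile for any $\theta$ is recovered from a representative in $[0,\pi/M]$.
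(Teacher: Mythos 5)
Your proof is correct and follows essentially the same two-step route as the paper: first reduce to $\theta \in [0, 2\pi/M)$ using the rotational invariance $e^{j2\pi/M}\mathcal{S} = \mathcal{S}$, then fold that wedge in half about $\theta = \pi/M$. The only difference is in the folding step: where the paper asserts that $\mathcal{S}+\gamma e^{j\beta}\mathcal{S}$ and $\mathcal{S}+\gamma e^{-j\beta}\mathcal{S}$ share a distance profile ``due to the symmetric nature of the $M$-PSK constellation,'' you justify the same $\theta \mapsto -\theta$ invariance explicitly via the conjugation bijection $s \mapsto \bar{s}$ of $\mathcal{S}$, which makes the paper's informal claim rigorous and is a welcome tightening rather than a different argument.
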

\begin{proof}
The proof is in two steps. First we show that the distance profile is a repetitive structure with period $ 2\pi /M $. Next, it is shown that within the wedge $ [0,2\pi /M] $ the distance profile is symmetric about the bisector of this wedge \textit{i.e.}, the  $ \theta = \pi /M $ line. We have from \eqref{eqneff}, 
\begin{align}
\nonumber
\mathcal{S}_{\text{eff}}= \mathcal{S}+ \gamma e^{j \theta } \mathcal{S} .
\end{align}
For any arbitrary value of $ \theta = \frac{k 2 \pi }{M} +\theta ' $ where $k \in \mathbb{Z},\, 0 \leq \theta ' < \frac{2 \pi}{M}$,
\begin{align}
\nonumber
\mathcal{S}_{\text{eff}}&= \mathcal{S}+ \gamma e^{j \theta ' } (\mathcal{S}e^{j \frac{2 k \pi}{M}})\\
\nonumber
&= \mathcal{S}+ \gamma e^{j \theta ' } \mathcal{S}.
\end{align}
The last equality follows from the fact that rotating a $M$-PSK constellation by an integral multiple of $2\pi /M$ does not alter the distance profile of the constellation. Thus, whatever distance profiles for $ \mathcal{S}_{\text{eff}} $ are obtained for the wedge $ [0,2\pi /M] $,  it is exactly repeated for the remaining $M-1$ wedges to cover the entire range of $ \theta $.
\vspace{3pt}

To show that the distance profiles are symmetric about $\theta= \pi /M$, we need to show that 
$\mathcal{S}+\gamma e^ {j (\frac{\pi}{M} + \alpha)} \mathcal{S}$ and $\mathcal{S}+\gamma e^{j (\frac{\pi}{M} -\alpha )}\mathcal{S}$, where $0\leq \alpha \leq \pi /M$, have the same distance profiles. We have
\begin{align}
\nonumber
\mathcal{S}+\gamma e^{j (\frac{\pi}{M} -\alpha )}\mathcal{S}&=\mathcal{S}+\gamma e^{j (\frac{\pi}{M} -\alpha )}(\mathcal{S}e^ {\frac{-2\pi}{M}})\\
\label{Lemma2_1}
&=\mathcal{S}+\gamma e^{-j (\frac{\pi}{M} +\alpha )}\mathcal{S}.
\end{align}
The first equality is because $\mathcal{S}=\mathcal{S}e^{j k 2\pi /M},$ \emph{i.e.} rotating $\mathcal{S}$ by  $2k\pi /M$ gives the same constellation. Thus for $k=-1$, $\mathcal{S} = \mathcal{S}e^ {\frac{-2\pi}{M}}$. Also due to the symmetric nature of $M$-PSK constellation, the distance distribution of the sum constellation depends only on the relative angle of rotation between the input constellations. Thus $\mathcal{S}+\gamma e^ \beta \mathcal{S}$ and  $\mathcal{S}+\gamma e^ {-\beta} \mathcal{S}$ have same distance profiles, for any $\beta \in [0,\pi]$. This together with \eqref{Lemma2_1} proves the second part of the lemma. 
\end{proof}

From Lemma \ref{lemmaangle}, it is clear that when both users use $M$-PSK signal sets, if $ (\gamma',\theta') $ is a singular fade state, then there exists singular fade states at $ (\gamma',\theta' + p \frac{2\pi}{M}) $, where $1\leq p \leq M-1$ because distance distribution in $\mathcal{S}_{\text{eff}}$ is periodic with period $\frac{2\pi}{M}$. The distance distribution of $\mathcal{S}_{\text{eff}}$ is the basis for channel quantization. Also from Lemma \ref{lemmaangle}, it suffices to obtain such a quantization only for the wedge $ [0, \pi /M]$. This can then be reflected along the $\theta = \pi /M$ line, to give the quantization for the wedge $[0, 2\pi /M]$, which when repeated for the remaining $M-1$ wedges will cover the entire $(\Gamma , \Theta)$ plane. 

\subsection{Channel Quantization for the $M$-PSK case} \label{sec:chan_quan}

In this subsection we propose a technique to obtain the quantization of the $(\Gamma, \Theta)$ plane, when both users use $M$-PSK signal sets. From Lemma 1, when both users  use  $M$-PSK signal sets at the input, the minimum distance in $\mathcal{S}_{\text{eff}}$ for any value of $(\gamma,\theta)$, $d_{min}(\gamma,\theta) \leq d_{min}(\mathcal{S}) =\sqrt{2\left[1-\cos (\frac{2 \pi}{M})\right]}$. Now the following lemma gives the number of singular fade states in the wedge $[0,\pi /M]$.
\vspace{3pt}
\begin{figure*}[t]
\centering
\includegraphics[totalheight=3.42in,width=3.45in]{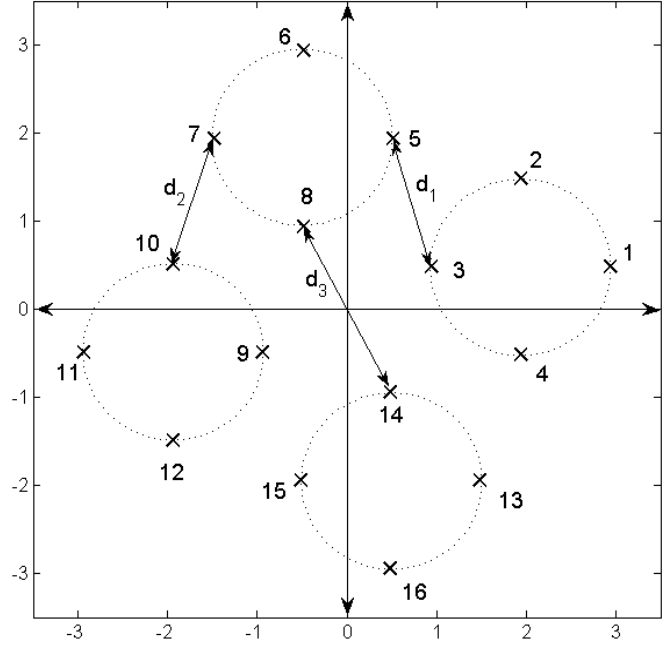}
\caption{$ \mathcal{S}_{\text{eff}} $ constellation, when both users use QPSK signal sets, for $ (\gamma ,\theta )= (2,14^\circ) $. In the figure $ d_i $ stands for the class distance function $ d_{\mathcal{C}_{k_i}} (\gamma, \theta)$. }	
\label{fig:qpskeff}	
\end{figure*}
\begin{lemma}
\label{periodic}
When both users use $M$-PSK signal sets at the input, the number of singular fade states in the wedge $[0, \pi /M]$, is given by $\frac{M^2}{8}-\frac{M}{4}+1$.  Further, these singular fade states lie along the two lines $\theta=0$ and $\theta=\pi /M$.
\end{lemma}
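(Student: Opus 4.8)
The plan is to parametrize every singular fade state explicitly and then reduce the lemma to a counting problem on two lines.

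First I would write the $M$-PSK points as $\omega^{k}$ with $\omega = e^{j2\pi/M}$, so that any nonzero difference is $s-s' = \omega^{b}(\omega^{p}-1)$ with $p\in\{1,\dots,M-1\}$ and $b\in\{0,\dots,M-1\}$. Substituting $\omega^{p}-1 = 2j\sin(\pi p/M)\,e^{j\pi p/M}$ into \eqref{eqn:sing}, and noting that for $\gamma\ge 1$ both differences must be nonzero (if $s_2=s_2'$ then $\gamma=\infty$, and if $s_1=s_1'$ then $\gamma=0<1$), a singular fade state becomes
\begin{align}
\nonumber
\gamma e^{j\theta} = -\,\omega^{\,b_1-b_2}\,\frac{\sin(\pi p_1/M)}{\sin(\pi p_2/M)}\,e^{j\pi(p_1-p_2)/M}.
\end{align}
Since $\sin(\pi p/M)>0$ for $1\le p\le M-1$, the modulus is $\gamma=\sin(\pi p_1/M)/\sin(\pi p_2/M)$, and reading off the phase gives $\theta \equiv \pi + \tfrac{\pi}{M}\bigl(2(b_1-b_2)+(p_1-p_2)\bigr)\pmod{2\pi}$, an integer multiple of $\pi/M$. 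As the only integer multiples of $\pi/M$ in $[0,\pi/M]$ are $0$ and $\pi/M$, every singular fade state of the wedge lies on the line $\theta=0$ or $\theta=\pi/M$; this already settles the second assertion of the lemma.

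Second I would decide which pairs fall on which line. Writing $\theta=\pi m/M$, the integer $m$ has the same parity as $p_1-p_2$, because $2(b_1-b_2)$ is even and $M$ is even. Thus $\theta=0$ forces $p_1,p_2$ of equal parity and $\theta=\pi/M$ forces opposite parity; conversely, for parities consistent with the desired line the congruence fixing $\theta$ can be solved for $b_1-b_2$ (its right-hand side is then even), so every such pair is realized. Because $\sin(\pi p/M)=\sin(\pi(M-p)/M)$ and $p,\,M-p$ share parity, I would replace each $p$ by $q=\min(p,M-p)\in\{1,\dots,M/2\}$: the modulus depends only on $(q_1,q_2)$, the parity (hence the line) is determined by $q$, and $\gamma\ge 1$ reads $q_1\ge q_2$.

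Third comes the count, using that among $\{1,\dots,M/2\}$ there are $M/4$ odd and $M/4$ even values. On $\theta=0$ the admissible pairs are the same-parity pairs $q_1\ge q_2$: the equal pairs give the single value $\gamma=1$, and the strictly ordered ones number $2\binom{M/4}{2}=\tfrac{M^2}{16}-\tfrac{M}{4}$, for a total of $\tfrac{M^2}{16}-\tfrac{M}{4}+1$. On $\theta=\pi/M$ the admissible pairs are the opposite-parity pairs with $q_1>q_2$, of which there are $\binom{M/2}{2}-2\binom{M/4}{2}=\tfrac{M^2}{16}$. Summing yields $\tfrac{M^2}{8}-\tfrac{M}{4}+1$, as claimed.

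The step that carries the real content, and the one I expect to be the main obstacle, is showing that distinct admissible pairs $(q_1,q_2)$ yield distinct moduli $\gamma$ on a given line, for only then does the pair count equal the number of distinct singular fade states. Concretely I must exclude accidental coincidences $\sin(\pi a/M)\sin(\pi d/M)=\sin(\pi b/M)\sin(\pi c/M)$, which by product-to-sum is $\cos\tfrac{\pi(a-d)}{M}-\cos\tfrac{\pi(a+d)}{M}=\cos\tfrac{\pi(b-c)}{M}-\cos\tfrac{\pi(b+c)}{M}$. The cosines of these dyadic angles satisfy nontrivial $\mathbb{Q}$-linear relations (there are $M+1$ of them inside a field of degree $M/2$), so distinctness is not automatic. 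I would dispatch the generic coincidences by a monotonicity argument exploiting the ordering $q_1\ge q_2$ within each parity class, and isolate the residual ones for a Galois-theoretic argument over $\mathbb{Q}(e^{j\pi/M})$ that constrains which index combinations can match; this distinctness claim is where the work lies, the remainder being bookkeeping.
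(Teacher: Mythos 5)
Your route is genuinely different from the paper's. The paper's own proof is essentially bookkeeping on top of cited results: it takes from \cite{Vijay} the total number $\frac{M^3}{4}-\frac{M^2}{2}+M$ of nonzero singular fade states, the fact that $M$ of them lie on $\gamma=1$, the symmetry that $\gamma e^{j\theta}$ singular implies $\frac{1}{\gamma}e^{-j\theta}$ singular, and the fact that the singular fade states lie on the lines $\theta=0$ and $\theta=\pi/M$; it then divides by the $M$-fold periodicity of Lemma \ref{lemmaangle} to get $\frac{1}{M}\bigl(\frac{M^3/4-M^2/2}{2}+M\bigr)=\frac{M^2}{8}-\frac{M}{4}+1$. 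You instead re-derive everything from \eqref{eqn:sing}, and the structural half of your argument is correct and complete: writing differences as $\omega^{b}(\omega^{p}-1)$, extracting $\gamma=\sin(\pi p_1/M)/\sin(\pi p_2/M)$ and a phase that is an integer multiple of $\pi/M$ (which settles the second assertion), the parity criterion for which line a pair lands on, the solvability in $b_1-b_2$, the reduction to $q=\min(p,M-p)$ with $\gamma\geq 1$ reading $q_1\geq q_2$, and the arithmetic $\bigl(\tfrac{M^2}{16}-\tfrac{M}{4}+1\bigr)+\tfrac{M^2}{16}=\tfrac{M^2}{8}-\tfrac{M}{4}+1$ all check out.

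However, the distinctness claim you defer at the end is a genuine gap, not a residual formality, and as written your proof establishes only that the singular fade states lie on the two lines and number \emph{at most} $\frac{M^2}{8}-\frac{M}{4}+1$. Worse, the claim cannot be handled the way you suggest: distinctness of the ratios $\sin(\pi q_1/M)/\sin(\pi q_2/M)$ is actually \emph{false} for general even $M$, so no monotonicity argument within parity classes can work, and the power-of-two hypothesis must be invoked exactly at this step. Concretely, for $M=12$ the identity $\sin(75^\circ)\sin(15^\circ)=\sin^2(30^\circ)$ gives $\sin(5\pi/12)/\sin(2\pi/12)=\sin(2\pi/12)/\sin(\pi/12)$, so the distinct admissible pairs $(5,2)$ and $(2,1)$, both opposite-parity and hence both on the line $\theta=\pi/12$, collapse to one singular fade state (similarly $(6,3)$ and $(3,2)$ collide), and the formula $\frac{M^2}{8}-\frac{M}{4}+1$ overcounts by $2$; for $M=30$, $\sin(54^\circ)\sin(18^\circ)=\sin^2(30^\circ)$ makes $(9,5)$ and $(5,3)$ collide on the line $\theta=0$. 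These are exact solutions of your product-to-sum equation with all indices in range, so the entire burden of the count falls on the cyclotomic/Galois step you only gesture at: one must show that when $2M$ is a power of two, $\sin(\pi q_1/M)\sin(\pi q_4/M)=\sin(\pi q_3/M)\sin(\pi q_2/M)$ forces $\{$equality of the pairs$\}$, e.g.\ via a $\mathbb{Q}$-basis of the maximal real subfield of $\mathbb{Q}(e^{j\pi/M})$ and descent for even-index cosines. None of that work appears in the proposal, whereas the paper sidesteps it entirely by inheriting the count of distinct singular fade states from \cite{Vijay}.
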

\begin{proof}
From \cite{Vijay}, the total number  of singular fade states other than zero is $\frac{M^3}{4}-\frac{M^2}{2}+M$. Out of these, $M$ lie on the circle $\gamma=1$. It is also known from \cite{Vijay}, that if $\gamma e^{j \theta}$ is a singular fade state, then $\frac{1}{\gamma}e^{-j \theta}$ is also a singular fade state.  Thus,  half of the total number of remaining singular fade states lie inside the circle $\gamma=1$ and the other half lies outside it. This along with the fact that singular fade states are periodic, implies the number of singular fade states for the wedge $ [0,\pi /M]$, is given by
\begin{align}
\nonumber
\frac{1}{M}\left( \frac{\frac{M^3}{4}-\frac{M^2}{2}}{2}+M\right)  =\frac{M^2}{8}-\frac{M}{4}+1.
\end{align}
\noindent Also from \cite{Vijay}, it is clear that these fade states lie along $\theta =0$ and $\theta = \pi /M$ lines.
\end{proof}
We denote this set of all singular fade states lying in the wedge $ [0, \pi/M] $ by $ \mathcal{H}_W $. Let $N_W = \vert \mathcal{H}_W \vert $. 
\vspace{3pt}

Observe from \eqref{eqn:dist}, that the distance between two points in $ \mathcal{S}_{\text{eff}} $ is a function of $ \gamma $ and $ \theta $. Let $ \vert \triangle s_j \vert = \vert s_j-s'_j \vert  $ and $ \phi_j = \angle (s_j-s'_j) $ for $ j=1,\,2 $. Now from \eqref{eqn:dist}, the distance between the elements of the pair $ \lbrace (s_1,s_2),(s'_1,s'_2) \rbrace  \in  \mathcal{S}^2_{\text{eff}} $ is

{\footnotesize
\begin{align}
\label{eq:distclassex}
d^2_{(s_1,s_2)\leftrightarrow (s'_1,s'_2)}= \vert \triangle s_1 \vert ^2 + \gamma ^2 \vert \triangle s_2 \vert ^2 + 2 \gamma \vert \triangle s_1 \triangle s_2 \vert \cos (\theta + \phi_2 -\phi_1 ). 
\end{align}
}
Consider another other pair $ \lbrace (\hat{s}_1,\hat{s}_2), (\hat{s}'_1,\hat{s}'_2)\rbrace \in \mathcal{S}^2_{\text{eff}} $ with $ \hat{s}_1,\, \hat{s}'_1,\,\hat{s}_2,\,\hat{s}'_2 \in \mathcal{S}$, and let $ \vert \triangle \hat{s}_j \vert =  \vert \hat{s}_j - \hat{s}'_j \vert $ and $ \hat{\phi}_j = \angle (\hat{s}_j-\hat{s}'_j) $ for $ j=1,\,2 $. If $ \vert \triangle \hat{s}_1 \vert = \vert \triangle s_1 \vert $, $ \vert \triangle \hat{s}_2 \vert = \vert \triangle  s_2 \vert $ and $ \hat{\phi}_2-\hat{\phi}_1 = \phi_2 - \phi_1 $ or $ \hat{\phi}_2-\hat{\phi}_1 = \pi -(\phi_2 - \phi_1) $, then from \eqref{eq:distclassex}, $ d^2_{(\hat{s}_1,\hat{s}_2)\leftrightarrow (\hat{s}'_1,\hat{s}'_2)} = d^2_{(s_1,s_2)\leftrightarrow (s'_1,s'_2)} $ for all values of $ (\gamma, \theta) $, even though the value of this distance changes with $ (\gamma, \theta) $.
\vspace{3pt}
\begin{definition} \label{def:distclass}
A distance class denoted by $ \mathcal{C} $, is a subset of $ \mathcal{S}_{\text{eff}}^2 $, which contains the pairs of the form $ \lbrace (s_1,s_2),(s'_1,s'_2)\rbrace $, $ (s_1,s_2)\neq (s'_1,s'_2) $ where $ (s_1,s_2)$ and $(s'_1,s'_2) $ denote the complex points in $ \mathcal{S}_{eff} $, such that the distance between the two elements of a pair is same for all pairs in $ \mathcal{C} $ and this property holds for all values of $ \gamma $ and $ \theta $, though the value of the distance depends on $ (\gamma, \theta) $.
\end{definition}

For a given input constellation $ \mathcal{S} $, let $ \bar{\mathcal{C}} $ denote the set of the all distance classes for it. 

\begin{definition} \label{def:classfunc}
Associated with every distance class $ \mathcal{C} $ is a function $ d_{\mathcal{C}}(\gamma ,\theta): (\Gamma,\Theta )\leftrightarrow \mathbb{R} $, called the class distance function, which gives the value of the distance between the two elements of a pair in $ \mathcal{C} $ for any $ (\gamma, \theta) $.
\end{definition}
\begin{definition} \label{def:fadefunc}
For a given fade state $ (\gamma, \theta) $, the function $ d_{\gamma , \theta} (\mathcal{C}): \bar{\mathcal{C}} \rightarrow \mathbb{R} $ gives the value of the distance between the two elements of a pair in $ \mathcal{C} $, for any $ \mathcal{C} \in \bar{\mathcal{C}}$. This is called the fade state distance function.
\end{definition}
\vspace{3pt}

We use integer $ m $, $ 1\leq m \leq M $ to represent the point $ e^{j\frac{(m-1)2 \pi}{M}} $ in $ \mathcal{S} $ \emph{i.e.} the $ M $-PSK signal set. The integer $ q=m+M(n-1) $, $ 1\leq q \leq M^2 $ denotes the complex point in $ \mathcal{S}_{\text{eff}} $ obtained by combining the points $ m $ and $ n $ of $ \mathcal{S} $ \emph{i.e.} it refers to the point $ e^{j\frac{(m-1)2 \pi}{M}} + \gamma e^{j \theta}  e^{j\frac{(n-1)2 \pi}{M}} $ in $ \mathcal{S}_{\text{eff}} $. For each distance class $ \mathcal{C} $, among all the pairs $ (i,j) \in \mathcal{C} $ choose the one with the minimum value of $ i+j $ to be the  representative in $ \mathcal{C} $. If more than one pair  has the same value of $ i+j $ choose the one with the lowest value of $ i $ as the class representative. When the users $ M $-PSK signal set at input, there are $ \frac{M^2(M^2-1)}{2} $ pairwise distances in $ \mathcal{S}_{\text{eff}} $. These pairwise distances are thus partitioned into distance classes.
\begin{example}
Fig. \ref{fig:qpskeff}, shows $ \mathcal{S}_{\text{eff}} $ when both users use QPSK constellations for $(\gamma, \theta)=(2,14^\circ)$. There are 120 pairwise distances and 20 distance classes. These are listed in Table \ref{tab:qpskex}, along with the corresponding class distance functions, and the class representatives.
\end{example}
\begin{table*}[t]
\caption{Distance classes when both users use QPSK signal set.}
\label{tab:qpskex}
\begin{center}
\begin{tabular}{|c|l|l|l|}
\hline
$ k $ & $ \mathcal{C}_k $ & $ d_{\mathcal{C}_k}(\gamma, \theta) $ & Class representative\\ \hline
$ 1 $ & $(1,2),(1,4),(2,3),(3,4),(5,6),(6,7),(7,8),(5,8),$ & $ \sqrt{2} $ & $ (1,2) $ \\
      & $(9,10),(10,11),(11,12),(9,12),(13,14),(14,15),(15,16),(13,16)$ &   & \\ \hline
$ 2 $ & $(1,3),(2,4),(5,7),(6,8),(9,11),(10,12),(13,15),(14,16)$ & $ 2 $ & $(1,3)$  \\ \hline
$ 3 $ & $(1,6),(1,16),(2,15),(4,7),(6,11),(5,12),(11,16),(10,13)$ & $ 2 \gamma ^2 +2+ 4 \gamma \cos \theta $ & $ (1,6) $ \\ \hline
$ 4 $ & $(1,8),(2,7),(2,13),(3,16),(6,9),(7,12),(11,14),(12,13)$ & $ 2\gamma ^2 + 2+ 4 \gamma \sin \theta$ & $ (1,8) $ \\ \hline
$ 5 $ & $ (1,14),(4,15), (3,6),(4,5),(8,11),(6,10),(9,16),(10,15) $ & $2\gamma ^2 + 2- 4 \gamma \sin \theta$ & $ (3,6) $ \\ \hline
$ 6 $ & $ (2,5),(3,8),(3,14),(4,13),(7,10),(8,9),(12,15),(9,14) $ & $ 2 \gamma ^2 +2- 4 \gamma \cos \theta$ & $ (2,5) $ \\ \hline
$ 7 $ & $ (1,5), (1,13),(4,8),(2,6),(2,14),(3,7),(3,15),(4,16),$ & $ 2 \gamma ^2 $  & $ (1,5) $\\ 
& $ (6,10),(7,11),(8,12),(5,9),(12,16),(11,15),(10,14),(9,13) $ & & \\ \hline 
$ 8 $ & $ (1,9),(2,10),(3,11),(4,12),(6,14),(5,13),(7,15),(8,16) $ & $ 4 \gamma ^2  $ & $ (1,9) $ \\ \hline
$ 9 $ & $ (1,7),(2,16),(6,12),(11,13) $ & $ 2 \gamma ^2 +4 + 4 \gamma \cos \theta + 4 \gamma \sin \theta $ & $ (1,7) $ \\ \hline
$ 10 $ & $ (1,15),(4,6),(5,11),(10,16) $ & $ 2 \gamma ^2 +4 + 4 \gamma \cos \theta - 4 \gamma \sin \theta $ & $ (4,6) $ \\ \hline
$ 11 $ & $ (3,13),(2,8),(7,9),(12,14)) $ &  $ 2 \gamma ^2 +4 - 4 \gamma \cos \theta + 4 \gamma \sin \theta $ & $ (2,8) $ \\ \hline
$ 12 $ & $ (3,5),(8,10),(9,15),(4,14) $ & $   2 \gamma ^2 +4 - 4 \gamma \cos \theta - 4 \gamma \sin \theta $ & $ (3,5) $ \\ \hline
$ 13 $ & $ (1,12),(2,11),(6,13),(7,16) $ &  $ 4 \gamma ^2 +2 + 4 \gamma \cos \theta + 4 \gamma \sin \theta $ & $ (1,12) $\\ \hline
$ 14 $ & $ (1,10),(6,15),(4,11),(5,16) $ & $ 4 \gamma ^2 +2 + 4 \gamma \cos \theta - 4 \gamma \sin \theta $ & $ (1,10) $ \\ \hline
$ 15 $ & $ (2,9),(3,12),(7,14),(8,13) $ & $ 4 \gamma ^2 +2 - 4 \gamma \cos \theta + 4 \gamma \sin \theta $ & $ (2,9) $ \\ \hline
$ 16 $ & $ (3,10),(8,15),(4,9),(5,14) $ & $ 4 \gamma ^2 +2 - 4 \gamma \cos \theta - 4 \gamma \sin \theta $ & $ (3,10) $ \\ \hline
$ 17 $ & $ (1,11),(6,16) $ & $ 4 \gamma ^2 +4 + 8 \gamma \cos \theta $ & $ (1,11) $ \\ \hline
$ 18 $ & $ (2,12),(7,13) $ & $ 4 \gamma ^2 +4 + 8 \gamma \sin \theta $ & $ (2,12) $ \\ \hline
$ 19 $ & $ (3,9),(8,14) $ & $ 4 \gamma ^2 +4 - 8 \gamma \cos \theta $ & $ (3,9) $ \\ \hline
$ 20 $ & $ (4,10),(5,15) $ & $ 4 \gamma ^2 +4 - 8 \gamma \sin \theta $ & $ (4,10) $ \\ \hline
\end{tabular}
\end{center}
\end{table*}

We define the set of all class distance function, $ d_{\bar{\mathcal{C}}}(\gamma, \theta) $ and the set of all fade state distance functions $ d_{\Gamma, \Theta} (\mathcal{C}) $ as follows,
\begin{align}
\nonumber
d_{\bar{\mathcal{C}}}(\gamma, \theta) &= \lbrace d_{\mathcal{C}}(\gamma, \theta) \vert \mathcal{C} \in \bar{\mathcal{C}} \rbrace \\
\nonumber
d_{\Gamma, \Theta}(\mathcal{C})&= \lbrace d_{\gamma , \theta}(\mathcal{C})\vert (\gamma, \theta) \in (\Gamma , \Theta)\, \text{plane} \rbrace .
\end{align}

From Definition \ref{def:sing}, at a singular fade state the value of at least one of the class distance functions in $ d_{\bar{\mathcal{C}}}(\gamma, \theta) $ will reduce to zero. 
\begin{lemma}\label{lemma:ordering}
Among the set of all class distance functions that reduce to zero at the singular fade state $ (\gamma',\theta') $, there is a particular one which is the minimum among that set, for all values of $ (\gamma , \theta) \neq (\gamma',\theta' ) $.
\end{lemma}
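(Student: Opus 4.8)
The plan is to exploit the explicit form of the pairwise distance in \eqref{eqn:dist} and to show that every class distance function which vanishes at a fixed singular fade state $(\gamma',\theta')$ is, up to a multiplicative constant, one and the same function of $(\gamma,\theta)$. Once this is established, the ordering among these functions is dictated entirely by the constants and is therefore the same at every point of the $(\Gamma,\Theta)$ plane other than $(\gamma',\theta')$, which is exactly the assertion of the lemma.

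First I would fix a distance class $\mathcal{C}$ with $d_{\mathcal{C}}(\gamma',\theta')=0$ and pick any pair in it, with coordinate differences $\triangle s_1=s_1-s'_1$ and $\triangle s_2=s_2-s'_2$. Here $\triangle s_2\neq 0$, since otherwise $d_{\mathcal{C}}^2=\vert\triangle s_1\vert^2$ would be a nonzero constant and could never vanish. By Definition \ref{def:sing} together with \eqref{eqn:sing}, vanishing at $(\gamma',\theta')$ forces $\gamma' e^{j\theta'}=-\triangle s_1/\triangle s_2$, i.e. $\triangle s_1=-\gamma'e^{j\theta'}\triangle s_2$. Substituting this into \eqref{eqn:dist} yields the key factorization
\begin{align}
\nonumber
d_{\mathcal{C}}^2(\gamma,\theta)=\left\vert -\gamma' e^{j\theta'}\triangle s_2+\gamma e^{j\theta}\triangle s_2\right\vert^2=\vert\triangle s_2\vert^2\,\bigl\vert\gamma e^{j\theta}-\gamma' e^{j\theta'}\bigr\vert^2 .
\end{align}
The crucial point is that the factor $\bigl\vert\gamma e^{j\theta}-\gamma'e^{j\theta'}\bigr\vert^2$ is common to every class vanishing at $(\gamma',\theta')$ and is strictly positive for all $(\gamma,\theta)\neq(\gamma',\theta')$; only the prefactor $\vert\triangle s_2\vert^2$, a constant independent of $(\gamma,\theta)$, distinguishes one such class from another.

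From here the conclusion is immediate. For any two classes $\mathcal{C}$ and $\mathcal{C}'$ both vanishing at $(\gamma',\theta')$, the ratio $d_{\mathcal{C}}^2(\gamma,\theta)/d_{\mathcal{C}'}^2(\gamma,\theta)=\vert\triangle s_2\vert^2/\vert\triangle s'_2\vert^2$ is a fixed constant, so the class with the smaller value of $\vert\triangle s_2\vert$ has the smaller distance function at every $(\gamma,\theta)\neq(\gamma',\theta')$. It then remains to rule out ties: if two \emph{distinct} vanishing classes had equal $\vert\triangle s_2\vert$, they would, by the factorization above, have identical class distance functions for all $(\gamma,\theta)$, and by Definition \ref{def:distclass} would therefore be the same class, a contradiction. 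Hence the vanishing class with the strictly smallest $\vert\triangle s_2\vert$ is the desired minimizer, and it is in fact unique.

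I do not expect a serious obstacle here: the whole argument rests on spotting the factorization, after which everything is bookkeeping. The only step that needs genuine care is the no-ties argument, where one must invoke the defining property of a distance class (Definition \ref{def:distclass}) to conclude that distinct vanishing classes must carry distinct values of $\vert\triangle s_2\vert$; this is what guarantees that the minimum is attained by exactly one class rather than being merely a value shared by several.
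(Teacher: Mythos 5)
Your proof is correct and takes essentially the same route as the paper's: both rest on the factorization $d_{\mathcal{C}}^2(\gamma,\theta)=\vert\triangle s_2\vert^2\,\bigl\vert\gamma e^{j\theta}-\gamma' e^{j\theta'}\bigr\vert^2$ for every class vanishing at $(\gamma',\theta')$, so that these functions differ only in the constant coefficient $\vert\triangle s_2\vert^2$, whose distinctness across classes (Definition \ref{def:distclass}) yields a unique minimizer valid at every $(\gamma,\theta)\neq(\gamma',\theta')$. Your additional care in noting $\triangle s_2\neq 0$ and in spelling out the no-ties argument only makes explicit what the paper leaves implicit.
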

\begin{proof}
Let $ L $  be the number of class distance functions that reduce to zero at the singular fade state $ (\gamma',\theta') $. Denote these by $ d_{\mathcal{C}_i}(\gamma, \theta) $, $ 1\leq i \leq L $ and let $ \lbrace (s_{1,i}, s_{2,i}),(s'_{1,i}, s'_{2,i})\rbrace $ be the representative element for the distance class $ \mathcal{C}_i $.  From \eqref{eqn:dist} and \eqref{eqn:sing}, we have
\begin{align}
\nonumber
d_{\mathcal{C}_i}^2(\gamma,\theta)&= d^2_{(s_{1,i},s_{2,i})\leftrightarrow (s'_{1,i},s'_{2,i})} \\
\nonumber
&= \vert (s_{1,i}-s'_{1,i}) + \gamma e^{j \theta} (s_{2,i}-s'_{2,i})\vert ^2\\
\nonumber
&= \vert s_{2,i}-s'_{2,i}\vert ^2 \vert \gamma e^{j \theta} + \frac{s_{1,i}-s'_{1,i}}{s_{2,i}-s'_{2,i}}\vert ^2 \\
\label{eqn:ordr1}
&= \vert s_{2,i}-s'_{2,i}\vert ^2  \vert \gamma e^{j \theta } - \gamma' e ^{j \theta'}\vert ^2.
\end{align}
From \eqref{eqn:ordr1}, these $ L $ class distance functions differ only in the constant coefficient $ \vert s_{2,i}-s'_{2,i}\vert ^2  $. From Definition \ref{def:distclass}, all these coefficients are different. Let
\begin{align}
\nonumber
l'= \text{arg} \, \min_{1\leq i \leq L} \vert s_{2,i}-s'_{2,i}\vert ^2.
\end{align}
Now, from \eqref{eqn:ordr1}, $ d_{\mathcal{C}_{l'}}(\gamma, \theta) $ is minimum among all $ d_{\mathcal{C}_i}(\gamma, \theta) $ for all values of $ (\gamma , \theta) \neq (\gamma' , \theta')$. 
\end{proof}
\begin{definition}
\label{rd}
The region corresponding to distance class $ \mathcal{C} $, $ \mathcal{R}(\mathcal{C}) $ denotes the region in the complex plane $ \lbrace (\Gamma, \Theta)/ \mathcal{H} \rbrace $ for which the class distance function $ d_{\mathcal{C}}(\gamma, \theta) $ gives the minimum distance in  $ \mathcal{S}_{\text{eff}}$, \emph{i.e.},
\begin{align}
\nonumber
\mathcal{R}(\mathcal{C}) = \lbrace &(\gamma ,\theta ) \in \lbrace (\Gamma , \Theta) / \mathcal{H} \rbrace \vert \\
\nonumber
&d_{\mathcal{C}}(\gamma ,\theta) \leq d_{\mathcal{C'}}(\gamma , \theta)\: \text{for all}\: \mathcal{C'} \neq \mathcal{C} \in \bar{\mathcal{C}} \rbrace .
\end{align}
\end{definition}
\begin{definition}
When both the users use $ M $-PSK constellations at the input, $ \mathcal{R}_W (\mathcal{C}) $ denotes the portion of the region $ \mathcal{R}(\mathcal{C}) $ lying in the wedge $ [0, \pi/M] $, \emph{i.e.},
\begin{align}
\nonumber
\mathcal{R}_W(\mathcal{C}) = \mathcal{R}(\mathcal{C}) \cap \, \text{wedge}\, [0, \pi/M].
\end{align}
\end{definition}

Note that, when both the users use $ M $-PSK constellations at the input, for some $ \mathcal{C} \in \bar{\mathcal{C}} $ the corresponding region $ \mathcal{R}(\mathcal{C}) $ can be a null set, because the associated class distance function $ d_{\mathcal{C}}(\gamma, \theta) $ does not give the minimum distance in $ \mathcal{S}_{\text{eff}} $ for any value of $ (\gamma, \theta) $ in $ \lbrace (\Gamma, \Theta) \setminus \mathcal{H} \rbrace $. There is always a distance class $\mathcal{C} \in \bar{\mathcal{C}} $ for which the associated class distance function is $ d_{\mathcal{C}(\gamma, \theta)} = d_{min}(\mathcal{S}) $. We denote this particular distance class as $ \mathcal{C}_{d_{min}(\mathcal{S})} $. From Lemma \ref{lem:uppermin}, the value of this class distance function is the upper bound for the minimum distance in $ \mathcal{S}_{\text{eff}} $. For example when both users use QPSK signal sets at the input, then from Table \ref{tab:qpskex}, $ \mathcal{C}_{d_{min}(\mathcal{S})} = \mathcal{C}_1 $ and the associated class distance function is $ d_{\mathcal{C}_1}(\gamma, \theta)= d_{min}(\mathcal{S})= \sqrt{2} $.   
\vspace{3pt}

The procedure to obtain the quantization of the $(\Gamma,\Theta)$ plane, when both users use  $M$-PSK constellations at the input, is as follows:
\begin{enumerate}
\item[Step 1]
Obtain the $N_W$ singular fade states in $ \mathcal{H}_W $ \emph{i.e.}, lying in the wedge $[0, \pi /M]$.  Each of these singular fade state is denoted by $ (\gamma _i,\theta _i) $ where $ 1\leq i\leq N_W $. 
\item[Step 2]
For the singular fade state $ (\gamma_1, \theta _1) $ in $ \mathcal{H}_W $, identify the set of class distance functions in $ d_{\bar{\mathcal{C}}}(\gamma , \theta) $ that reduces to zero at that singular fade state $ (\gamma_1, \theta_1) $. Choose the one among them, which is minimum in that set for all values of $ (\gamma, \theta) \neq (\gamma_1, \theta_1) $. (From Lemma \ref{lemma:ordering}, there is always only one such class distance function.) Let this class distance function be $ d_{\mathcal{C}_{k_1}}(\gamma ,\theta) $ corresponding to distance class $ \mathcal{C}_{k_1} $. Repeat this for all $ (\gamma_i, \theta_i) \in \mathcal{H}_W $, to obtain a set of class distance functions $ \lbrace d_{\mathcal{C}_{k_i}}(\gamma, \theta) $, $ 1 \leq i \leq N_W \rbrace $. Each $ d_{\mathcal{C}_{k_i}}(\gamma, \theta) $ reduces to zero at the singular fade state $ (\gamma_i, \theta_i) $. This is the set of all possible class distance functions other than $ d_{min}(\mathcal{S}) $, that can possibly produce the minimum distance in $ \mathcal{S}_{\text{eff}} $.
\item[Step 3]
To find the region $ \mathcal{R}_W(\mathcal{C}_{k_i}) $, we need to obtain the values of $ (\gamma, \theta) \in \, \text{wedge}\, [0,\pi/M] $ for which  $d^2_{\mathcal{C}_{k_i}}(\gamma, \theta) \leq d^2_{\mathcal{C}_{k_j}}(\gamma, \theta) $ where $1\leq j\neq i \leq N_W$, and $d^2_{\mathcal{C}_{k_i}}(\gamma, \theta) \leq d^2_{min}(\mathcal{S})$. The curves $d^2_{\mathcal{C}_{k_i}}(\gamma, \theta)=d^2_{\mathcal{C}_{k_j}}(\gamma, \theta)$,  $1\leq i \neq j\leq N_W$, form the pairwise boundary between the regions corresponding to the two distance classes $ \mathcal{C}_{k_i} $ and $ \mathcal{C}_{k_j} $. The curves $ d^2_{\mathcal{C}_{k_i}}(\gamma ,\theta) = d^2_{min}(\mathcal{S})$ form the pairwise boundary between the regions corresponding to distance classes $ \mathcal{C}_{k_i} $ and $ \mathcal{C}_{d_{min}(\mathcal{S})} $. The region $\mathcal{R}_W(\mathcal{C}_{k_i})$  is that region in the wedge $ [0,\pi/M] $ excluding the complex point $ (\gamma_i, \theta_i) $, which  is the innermost region bounded by these pairwise boundaries, enclosing the singular fade state $ (\gamma_i,\theta_i) $. For example, Fig. \ref{fig:8pskregion} depicts the region corresponding to the singular fade state at $(1,0)$ when both users use 8-PSK signal sets. In the figure, the curve $d_1^2=d_j^2$ refers to the curve $d_{\mathcal{C}_{k_1}}^2(\gamma,\theta)=d_{\mathcal{C}_{k_k}}^2(\gamma,\theta)$, and $d_1^2= 2-\sqrt{2}$ refers to the curve $d_{\mathcal{C}_{k_1}}^2(\gamma ,\theta) = 2-\sqrt{2}$. It is the innermost region (shaded in the figure) in the wedge $ [0, \pi/8] $ bounded by the pair-wise boundaries $d_{\mathcal{C}_{k_1}}^2(\gamma, \theta)=d_{\mathcal{C}_{k_2}}^2(\gamma, \theta),\, d_{\mathcal{C}_{k_1}}^2(\gamma, \theta)=d_{\mathcal{C}_{k_4}}^2(\gamma, \theta),\, d_{\mathcal{C}_{k_1}}^2(\gamma, \theta)=d_{\mathcal{C}_{k_5}}^2(\gamma, \theta)$ and $ d_{\mathcal{C}_{k_1}}^2(\gamma, \theta)=d_{\mathcal{C}_{k_6}}^2(\gamma, \theta) $ surrounding the singular fade state $ (1,0)$.  
Once the regions $ \mathcal{R}(\mathcal{C}_{k_i}) $, $ 1\leq i \leq N_W $ are obtained, the region exterior to all these regions, lying within the wedge $ [0, \pi /M] $, is the region where $ d_{min}(\mathcal{S})$ is the minimum distance in $ \mathcal{S}_{\text{eff}} $, \emph{i.e.}, the region $ \mathcal{R}_W(\mathcal{C}_{d_{min}(\mathcal{S})}) $. 
\item[Step 4]
The quantization obtained in Step 3, for the wedge $[0,\pi/M]$, can now be extended by the procedure suggested in Section \ref{sec:quandist} to cover the entire $ (\Gamma,\Theta) $ plane.
\end{enumerate}

We will illustrate the procedure with two examples.
\vspace{3pt}

\begin{example} \label{exa:qpskquan} \textit{Channel Quantization for  QPSK signal sets. }\\
\begin{figure}[t]
\centering
\includegraphics[totalheight=2.3in,width=2.25in]{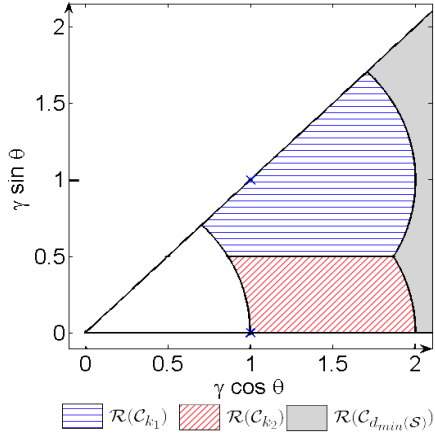}
\caption{Quantization of the wedge $[0,\pi/4]$ for QPSK signal sets. }	
\label{fig:qpsksector}	
\end{figure}
Here we consider the scenario where both users use QPSK constellations at input, \emph{i.e.} $M=4$. From Lemma \ref{periodic}, there are two singular fade states in the wedge  $ [0,\pi/4]$, \emph{i.e.} $N_W=2$ and these are at $(1,0)$ and $(\sqrt{2},\frac{\pi}{4})$.
The class distance functions in $d_{\bar{\mathcal{C}}}(\gamma, \theta)$ which reduce to zero at these singular fade states are  identified. For the singular fade state $(\sqrt{2},\frac{\pi}{4})$ the distance  $d_{\mathcal{C}_{k_1}}$ falls to zero, and for $(1,0)$ the distances $d_{\mathcal{C}_{k_2}}$ and $d_{\mathcal{C}_{k_3}}$ both fall to zero, as shown in Fig \ref{fig:qpskeff}. These are as follows:   
\begin{align}
\nonumber
d_{\mathcal{C}_{k_1}}^2(\gamma, \theta) &= 2\gamma ^2 +4 - 4\gamma \cos \theta - 4 \gamma \sin \theta \\
\nonumber
d_{\mathcal{C}_{k_2}}^2(\gamma, \theta) &= 2\gamma ^2 + 2 -4 \gamma \cos \theta\\
\nonumber
d_{\mathcal{C}_{k_3}}^2(\gamma, \theta) &= 4\gamma ^2 + 4 -8 \gamma \cos \theta = 2d_{\mathcal{C}_{k_2}}^2 \geq d_{\mathcal{C}_{k_2}}^2.
\end{align}
As $d_{\mathcal{C}_{k_3}}(\gamma, \theta) > d_{\mathcal{C}_{k_2}}(\gamma, \theta)$ for all values of $(\gamma,\theta) \neq (1,0)$, we consider the class distance function corresponding to the singular fade state $(1,0)$ as $d_{\mathcal{C}_{k_2}}(\gamma, \theta)$. Now we proceed to obtain the regions $ \mathcal{R}_W(\mathcal{C}_{k_1})$, $ \mathcal{R}_W(\mathcal{C}_{k_2})$ and $\mathcal{R}_W(\mathcal{C}_{d_{min}(\mathcal{S})})$. The pair-wise boundaries are obtained as follows,

{
\footnotesize
\begin{align}
\nonumber
d^2_{\mathcal{C}_{k_1}}(\gamma ,\theta) &= d^2_{\mathcal{C}_{k_2}}(\gamma, \theta) \Rightarrow \gamma \sin \theta = 1/2\\
\nonumber
d^2_{\mathcal{C}_{k_1}} (\gamma, \theta) &= d^2_{min}(\mathcal{S})=2 \Rightarrow (\gamma \cos \theta -1)^2+(\gamma \sin \theta -1)^2 =1\\
\nonumber
d^2_{\mathcal{C}_{k_2}}(\gamma, \theta) &= d^2_{min}(\mathcal{S})=2 \Rightarrow (\gamma \cos \theta -1)^2+\gamma ^2 \sin ^2 \theta  =1.
\end{align}
}
These regions are shown in Fig. \ref{fig:qpsksector} for the wedge $[0,\pi/4] $. This can now be extended to cover the entire range of $ \theta $. For this, the quantization obtained for $ \theta \in [0, \pi/4] $ is reflected  along the line $ \theta = \pi/4 $ to obtain the quantization for the wedge $ [0, \pi/2] $. This is now rotated by $ \frac{k\pi}{2} $, $1\leq k\leq 3  $, to obtain the quantization for $ \theta \in \left[ \frac{k\pi}{2}, \frac{(k+1)\pi}{2} \right]  $, thus covering the entire $ (\Gamma,\Theta) $ plane. This has been shown in Fig. \ref{fig:qpskquan}.
\end{example}    
\vspace{2pt}
\begin{figure}[t]
\centering
\includegraphics[totalheight=2.3in,width=2.2in]{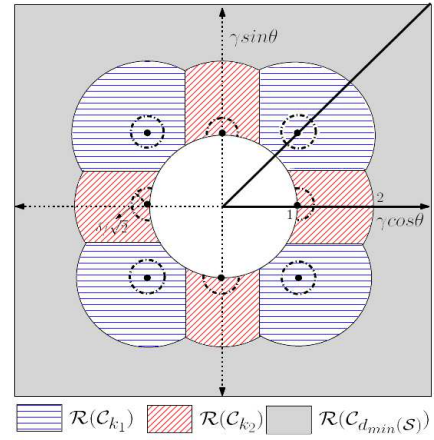}
\caption{Quantization of the entire $(\Gamma, \Theta)$ plane for QPSK signal sets }	
\label{fig:qpskquan}	
\end{figure}

\begin{example} \textit{Channel quantization for 8-PSK signal sets }

\noindent Here we consider the scenario when both users use 8-PSK signal sets at the input. The number of singular fade states lying in the wedge $[0, \pi/8]$ is, $N_W=7$. The singular fade states $(\gamma _i,\theta _i)$, $1\leq i \leq 7$ are as follows:
\begin{align}
\nonumber
&(1, 0),\: (\sqrt{2},0), \:(1+\sqrt{2} ,0), \\
\nonumber
&\left(\sqrt{4-2\sqrt{2}} ,\frac{\pi}{8}\right),\: \left(\sqrt{1+\frac{1}{\sqrt{2}}} ,\frac{\pi}{8}\right),\\
\nonumber
&\left(\sqrt{2+\sqrt{2} } ,\frac{\pi}{8}\right),\: \left(\sqrt{4+2\sqrt{2}} ,\frac{\pi}{8}\right).
\end{align} 
The class distance functions in $d_{\bar{\mathcal{C}}}(\gamma, \theta)$ which reduce to zero for each of the above singular fade states are identified. When more than one class distance function reduces to zero at a singular fade state $ (\gamma_i, \theta_i) $, the one which is minimum among them for any other $ (\gamma, \theta) \neq (\gamma_i, \theta_i) $ is chosen. The class distance functions $ d_{\mathcal{C}_{k_i}}(\gamma, \theta) $, $ 1 \leq i \leq 7 $  are identified as shown in Fig. \ref{fig:8pskeff}. In the figure $ d_i $ denotes the class distance function $ d_{\mathcal{C}_{k_i}}(\gamma , \theta) $. These are as follows

{
\footnotesize
\begin{align}
\nonumber
d^2_{\mathcal{C}_{k_1}}(\gamma ,\theta) &= (2-\sqrt{2})(\gamma ^2 - 2\gamma \cos \theta +1)\\
\nonumber
d^2_{\mathcal{C}_{k_2}}(\gamma ,\theta) &= 2\gamma ^2 + 4 - 4\sqrt{2}\gamma \cos \theta \\
\nonumber
d^2_{\mathcal{C}_{k_3}}(\gamma ,\theta) &= (2-\sqrt{2})\gamma ^2 + 2 +\sqrt{2}- 2\sqrt{2}\gamma \cos \theta \\
\nonumber
d^2_{\mathcal{C}_{k_4}}(\gamma ,\theta) &= (2+\sqrt{2})\gamma ^2 +4 -2(2+\sqrt{2})\gamma \cos \theta -2\sqrt{2}\gamma\sin \theta \\
\nonumber
d^2_{\mathcal{C}_{k_5}}(\gamma ,\theta) &= 2\gamma ^2 +2 + \sqrt{2}-2(1+\sqrt{2})\gamma\cos\theta - 2\gamma\sin\theta \\
\nonumber
d^2_{\mathcal{C}_{k_6}}(\gamma ,\theta) &= (2-\sqrt{2})\gamma ^2 +2 -2(\sqrt{2}-1)\gamma\sin\theta - 2\gamma\cos\theta \\
\nonumber
d^2_{\mathcal{C}_{k_7}}(\gamma ,\theta) &= (2-\sqrt{2})\gamma ^2 +4 -2(2-\sqrt{2})\gamma\sin\theta -2\sqrt{2}\gamma\cos\theta.
\end{align}
}
Now to obtain the region $\mathcal{R}_W(\mathcal{C}_{k_1})$ we need to obtain the curves $d^2_{\mathcal{C}_{k_1}}(\gamma,\theta)=d^2_{\mathcal{C}_{k_j}}(\gamma ,\theta) $, $2\leq j\leq7$ and $d^2_{\mathcal{C}_{k_1}}(\gamma ,\theta) = d^2_{min}(\mathcal{S})= 2-\sqrt{2}$. These are as follows:

{
\scriptsize
\begin{align}
\nonumber
d^2_{\mathcal{C}_{k_1}}(\gamma,\theta) &= d^2_{\mathcal{C}_{k_2}}(\gamma,\theta) \Rightarrow (\gamma\cos\theta -(3-\sqrt{2}))^2+\gamma ^2 \sin ^2 \theta = 10 -7\sqrt{2}\\
\nonumber
d^2_{\mathcal{C}_{k_1}}(\gamma,\theta) &= d^2_{\mathcal{C}_{k_3}}(\gamma , \theta) \Rightarrow \gamma\cos\theta = \frac{1}{ 2-\sqrt{2}}\\
\nonumber
d^2_{\mathcal{C}_{k_1}}(\gamma, \theta) &= d^2_{\mathcal{C}_{k_4}}(\gamma, \theta) \Rightarrow (\gamma\cos\theta -1)^2+(\gamma \sin\theta - \frac{1}{2})^2= \frac{3 -2\sqrt{2}}{4}\\
\nonumber
d^2_{\mathcal{C}_{k_1}}(\gamma,\theta) &= d^2_{\mathcal{C}_{k_5}}(\gamma, \theta) \Rightarrow (\gamma\cos\theta -2+\frac{1}{\sqrt{2}})^2+(\gamma\sin \theta - \frac{1}{\sqrt{2}})^2 \\
\nonumber
& \hspace{2in} = 3 -2\sqrt{2}\\
\nonumber
d^2_{\mathcal{C}_{k_1}}(\gamma ,\theta) &= d^2_{\mathcal{C}_{k_6}}(\gamma,\theta) \Rightarrow \gamma\cos\theta +\gamma \sin \theta = 1 +\frac{1}{\sqrt{2}}\\
\nonumber
d^2_{\mathcal{C}_{k_1}}(\gamma,\theta) &= d^2_{\mathcal{C}_{k_7}}(\gamma, \theta) \Rightarrow \sqrt{2}\gamma\cos\theta +\gamma \sin \theta = \frac{3}{2}+ \sqrt{2}\\
\nonumber
d^2_{\mathcal{C}_{k_1}}(\gamma, \theta) &= 2-\sqrt{2} \Rightarrow (\gamma\cos\theta -1)^2+\gamma ^2 \sin ^2 \theta = 1.
\end{align}
}
\begin{figure}[t]
\centering
\includegraphics[totalheight=2.5in,width=2.5in]{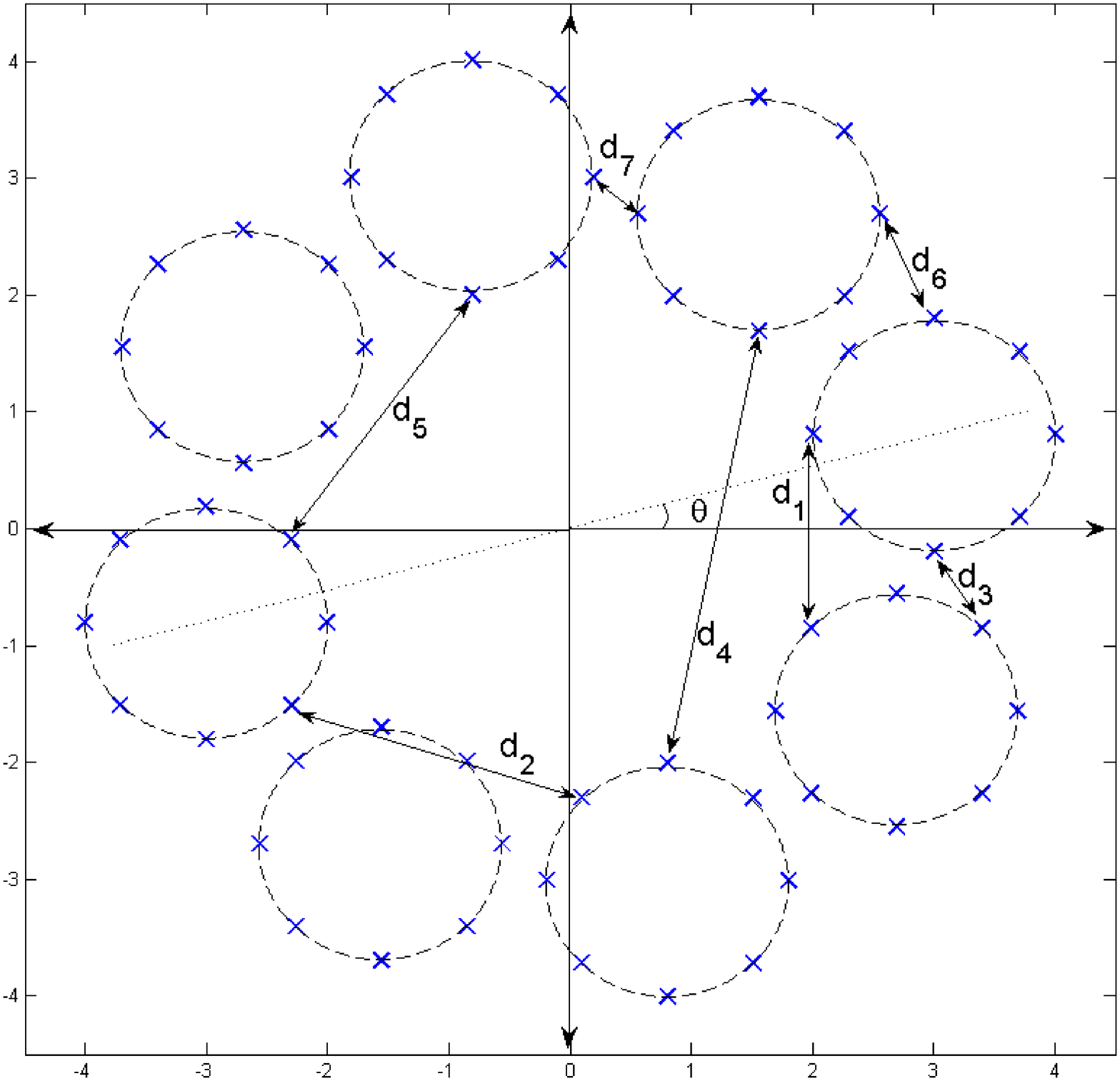}
\caption{Effective constellation $\mathcal{S}_{\text{eff}}$ for $(\gamma,\theta) = (2.9,10^\circ)$ when both users use 8-PSK signal set.}	
\label{fig:8pskeff}	
\end{figure}

All of the above curves are shown in Fig. \ref{fig:8pskregion}. In the figure, the curve $d_1^2=d_j^2$ refers to the curve $d_{\mathcal{C}_{k_1}}^2(\gamma,\theta)=d_{\mathcal{C}_{k_k}}^2(\gamma,\theta)$, and $d_1^2= 2-\sqrt{2}$ refers to the curve $d_{\mathcal{C}_{k_1}}^2(\gamma ,\theta) = 2-\sqrt{2}$. The region $\mathcal{R}_W(\mathcal{C}_{k_1})$ is the innermost region in the wedge $ [0,\pi/8] $  bounded by these curves, surrounding the point $ (1,0) $. It is the shaded  region in the Fig. \ref{fig:8pskregion}. All the regions $\mathcal{R}_W(\mathcal{C}_{k_i})$, $1\leq i \leq 7$ can be obtained by the same procedure. The region exterior to all these regions, inside the wedge $[0, \pi/8] $ is the region $\mathcal{R}_W(\mathcal{C}_{d_{min}(\mathcal{S})})$, where $d_{min}(\mathcal{S}) = \sqrt{2-\sqrt{2}}$ is the minimum distance in $\mathcal{S}_{\text{eff}}$. Thus, the quantization of the fade states for  the wedge $[0, \pi /8]$ is obtained. It is shown is Fig. \ref{fig:8psksector} in the next page. This can now be extended to cover the entire $ (\Gamma, \Theta) $ plane by the similar technique used for the QPSK case in the Example \ref{exa:qpskquan}. 
\end{example}

\section{THE ADAPTIVE MODULATION SCHEME}
In this section the fade states which results in reducing the minimum distance in $ \mathcal{S}_{\text{eff}} $ below a minimum distance guarantee of $ \delta $ are identified. Then a modulation scheme is proposed for the users to avoid these fade states by suitable relative rotation between the signal sets used by the two users. 
 
\subsection{Adaptive Modulation Scheme} \label{sec:adapt}

It is clear from Section \ref{sec:chan_quan} that the minimum distance in $ \mathcal{S}_{\text{eff}} $ falls to zero at the singular fade states. For fade states $ (\gamma ,\theta) $ lying close to a singular fade state, the minimum distance in $ \mathcal{S}_{\text{eff}} $ is very low, resulting in degradation of error performance at the destination. Hence, such values of $ (\gamma ,\theta)$ have to be avoided to provide better performance.

Our goal is to provide a minimum distance guarantee of $ \delta $ in $ \mathcal{S}_{\text{eff}} $ \emph{i.e.}, not allow minimum distance in $ \mathcal{S}_{\text{eff}} $ to fall below $ \delta $. In the previous subsection, the regions $ \mathcal{R}(\mathcal{C}_{k_i}) $ on the $ (\Gamma , \Theta) $ plane was identified, in which the class distance function  $ d_{\mathcal{C}_{k_i}}(\gamma, \theta) $,  $ 1\leq i \leq N_W $  gives the minimum distance in $ \mathcal{S}_{\text{eff}} $. In order to satisfy the minimum distance guarantee of $ \delta $ in $ \mathcal{S}_{\text{eff}} $, it is thus required to avoid the fade states $ (\gamma,\theta) $ for which $ d_{\mathcal{C}_{k_i}}(\gamma ,\theta) < \delta $, $ 1\leq i \leq N_W $.

At the singular fade state $ (\gamma_i, \theta _i) $, the class distance function $ d_{\mathcal{C}_{k_i}}(\gamma, \theta) $ reduces to zero. If $ \lbrace (s_{1,i},s_{2,i}), (s'_{1,i},s'_{2,i})\rbrace $ is the representative element for the distance class $\mathcal{C}_{k_i}$, then at the singular fade state  $ (\gamma _i, \theta _i) $ the two points $ (s_{1,i},s_{2,i}) $ and $ (s'_{1,i},s'_{2,i}) $ collapse to a single point in $ \mathcal{S}_{\text{eff}} $.
From the definition of singular fade state \eqref{eqn:sing}, we have 
\begin{align}
\nonumber
\gamma _i e^{j \theta _i} = -\frac{s_{1,i}-s'_{1,i}}{s_{2,i}-s'_{2,i}}.
\end{align} 
From  \eqref{eqn:dist}, we have
\begin{align}
\label{eqn:di-eff}
d_{\mathcal{C}_{k_i}}(\gamma , \theta) = \vert s_{2,i}-s'_{2,i}\vert \vert \gamma e^{j \theta } - \gamma _i e ^{j \theta _i}\vert.
\end{align}
Only those fade states $ (\gamma ,\theta) $ which results in  $ d_{\mathcal{C}_{k_i}}(\gamma, \theta) < \delta $, $1\leq i \leq N_W$ have to be avoided, \emph{i.e.}, from \eqref{eqn:di-eff}, we need to avoid the fade states $ (\gamma ,\theta)  $, for $ 1\leq i \leq N_W $, where 
\begin{align}
\label{eqn:violation}
\vert \gamma e^{j\theta} - \gamma _i e^{j \theta _i}\vert < \frac{\delta}{\vert s_{2,i}-s'_{2,i}\vert}.
\end{align}
\noindent The above equation represents a circular region in the complex plane $ (\Gamma, \Theta) $ centred at the singular fade state $ (\gamma _i, \theta _i) $ and radius $ \delta /\vert s_{2,i}-s'_{2,i}\vert $. 
We call these circular regions, the \textit{violation circles} because when the fade state lies inside them the minimum distance requirement of $ \mathcal{S}_{\text{eff}} $ is violated. For fade states outside the violation circles the minimum distance in $ \mathcal{S}_{\text{eff}} $ is always greater than $ \delta $. The radius of the violation circle centred at the singular fade state $ (\gamma_i, \theta _i) $ by is denoted by $ \rho (\gamma_i, \theta_i) $. Formally violation circles are defined as follows:
\vspace{3pt}

\begin{definition}
\emph{Violation Circles} are circular regions on the $(\Gamma ,\Theta)$ plane with centres at the singular fade states $ (\gamma _i,\theta _i)= -\frac{s_{1,i}-s'_{1,i}}{s_{2,i}-s'_{2,i}}$ with  radius $ \rho(\gamma_i,\theta_i)=\delta /\vert s_{2,i}-s'_{2,i}\vert $ for $1\leq i \leq N_W$.
\end{definition}
\vspace{3pt}

It is observed that the the violation circles centred at $ (\gamma_i,\theta_i+p\frac{2\pi}{M})  $, $ 1\leq i  \leq N_W \, \text{and} \, 1\leq p \leq M-1 $ have the same radius as the one centred at $ (\gamma_i,\theta_i) $, where $ (\gamma_i , \theta_i) \in \mathcal{H}_W $. This is because from Lemma \ref{lemmaangle}, the corresponding effective constellations are the same.  
\begin{figure*}[t]
\centering
\includegraphics[totalheight=2.6in,width=5in]{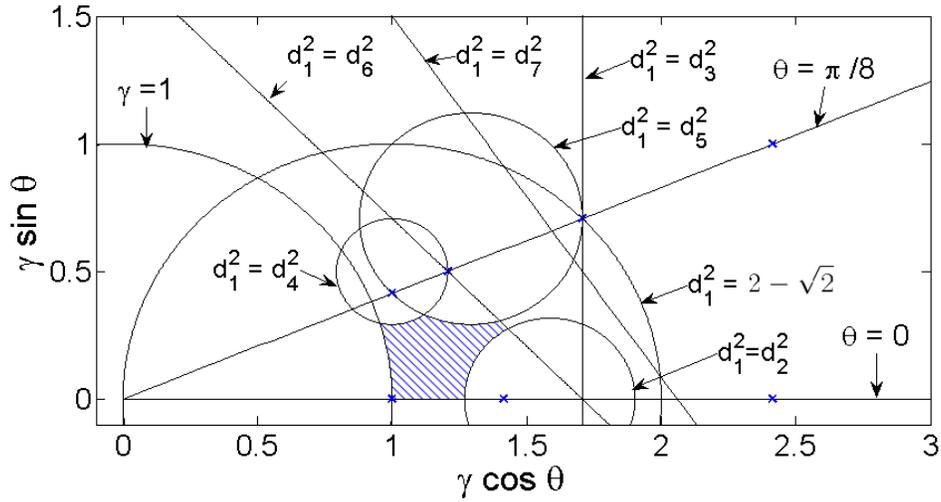}
\caption{The figure explains how to region $\mathcal{R}(d_1)$ (shaded region in the figure) corresponding to the singular fade state $(1,0^\circ)$ is obtained.}	
\label{fig:8pskregion}	
\end{figure*}
\begin{example}

\begin{figure*}[!t]
\centering
\includegraphics[totalheight=3.8in,width=6in]{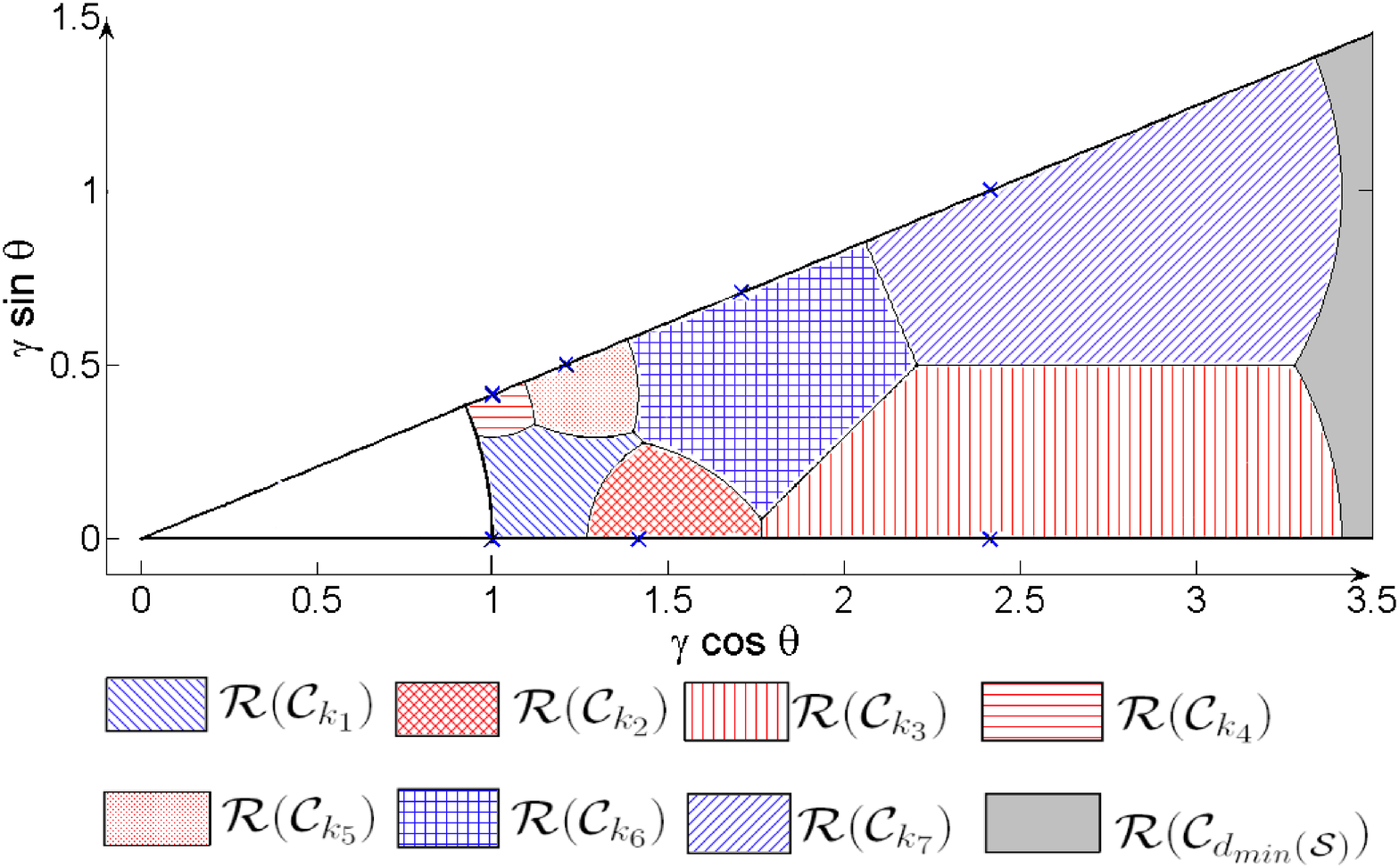}
\caption{Channel quantization for $\theta \in [0,\pi/8]$ when both users use 8-PSK signal sets}	
\label{fig:8psksector}	
\end{figure*}

When both users use QPSK constellations at the input, the violation circles are as follows. At singular fade state $ (\sqrt{2},\pi /4) $ the class distance function $ d_{\mathcal{C}_{k_1}}(\gamma,\theta) $ reduces to zero, \emph{i.e.} the points $ 3 $ and $ 5 $ in $ \mathcal{S}_{\text{eff}} $ collapse to a single point, as shown in Fig. \ref{fig:qpskeff}. The $ 3 $ and $ 5 $ are obtained after combining $ 3,\,1 \in \mathcal{S} $ and $ 1,\,2 \in \mathcal{S} $ respectively.  Thus, $ \vert s_{2,1}-s'_{2,1} \vert = \sqrt{2}$. The violation circle corresponding to class distance function $ d_{\mathcal{C}_{k_1}}(\gamma,\theta) $ is the circular region centred at $ (\sqrt{2}, \pi/4) $ and radius $\delta/\sqrt{2}$. Similarly, for the singular fade state at $ (1,0) $ the violation circle has a radius of $\delta/\sqrt{2}$. These are shown by dotted circles around the singular fade states in Fig. \ref{fig:qpskquan}. 
\end{example}

\begin{example}
When both users are using 8-PSK constellations at the input, the centres and radii of the violation circles corresponding to the distances $ d_{\mathcal{C}_{k_i}},\, 1\leq i \leq 7 $ are tabulated in  Table \ref{table1}.
\begin{table}[t]
\caption{Centre and Radius of violation circles for 8-PSK case}
\label{table1}
\begin{center}
\begin{tabular}{|c|l|c|}
\hline
$ i $ & Centre & Radius \\ \hline
$ 1 $ & $(1,0)$ & $\frac{\delta}{\sqrt{2-\sqrt{2}}}$  \\ \hline
$ 2 $ & $(\sqrt{2},0)$ & $\frac{\delta}{\sqrt{2}}$  \\ \hline
$ 3 $ & $(\sqrt{2}+1 ,0)$ & $\frac{\delta}{\sqrt{2-\sqrt{2}}}$ \\ \hline
$ 4 $ & $\left(\sqrt{4-2\sqrt{2}},\frac{\pi}{8}\right)$ & $\frac{\delta}{\sqrt{2+\sqrt{2}}}$ \\ \hline
$ 5 $ & $\left(\sqrt{1+\frac{1}{\sqrt{2}}} ,\frac{\pi}{8}\right)$ & $\frac{\delta}{\sqrt{2}}$ \\ \hline
$ 6 $ & $\left(\sqrt{2+\sqrt{2}} ,\frac{\pi}{8}\right)$ & $\frac{\delta}{\sqrt{2-\sqrt{2}}}$ \\ \hline
$ 7 $ & $\left(\sqrt{4+2\sqrt{2}},\frac{\pi}{8}\right)$ & $\frac{\delta}{\sqrt{2-\sqrt{2}}}$ \\ \hline
\end{tabular}
\end{center}
\end{table}
\end{example}

When the fade state $ (\gamma ,\theta) $ lies inside any of the violation circles the users need to adapt their transmission, in order to avoid these fade states effectively. One way to achieve this without increasing the transmit power, is to rotate the constellation of User-2. Rotation can be interpreted as simply altering the phase of the fade state.
\begin{lemma}
\label{relative}
When the fade state is $(\gamma , \theta )$, rotation of the constellation of User-2 by an angle $\alpha$ with respect to the constellation of User-1 in an anticlockwise direction, results in effectively altering the phase of the fade state from $ \theta $ to $ \theta + \alpha $.
\end{lemma}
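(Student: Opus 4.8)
The plan is to translate the operation ``rotate User-2's constellation anticlockwise by $\alpha$'' into its algebraic effect on the effective constellation defined in \eqref{eqneff}, and then simply read off the claimed phase shift. There is no inequality or optimization to perform; the whole content is bookkeeping of where the rotation lands in the factored expression for $\mathcal{S}_{\text{sum}}$.

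First I would model the rotation precisely. Rotating the signal set $\mathcal{S}$ of User-2 anticlockwise by $\alpha$ means User-2 now draws its transmit symbol from $e^{j\alpha}\mathcal{S}$; equivalently, if $x_2 \in \mathcal{S}$ is the symbol it would otherwise have sent, it instead transmits $e^{j\alpha}x_2$. Since $|e^{j\alpha}| = 1$, every point of $e^{j\alpha}\mathcal{S}$ has the same modulus as the corresponding point of $\mathcal{S}$, so $\mathbb{E}[|e^{j\alpha}x_2|^2] = \mathbb{E}[|x_2|^2] = 1$ and the transmit power is unchanged. This is precisely the observation that makes the adaptation a pure rotation rather than a power reallocation. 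Next I would substitute into the received signal: with User-2 transmitting $e^{j\alpha}x_2$, the received signal becomes $y = \sqrt{P}h_1 x_1 + \sqrt{P}h_2 e^{j\alpha} x_2 + z$, and the sum constellation is $\mathcal{S}_{\text{sum}} = \sqrt{P}h_1\mathcal{S} + \sqrt{P}h_2 e^{j\alpha}\mathcal{S}$. Factoring out $\sqrt{P}h_1$ exactly as in the derivation of \eqref{eqneff} gives $\mathcal{S}_{\text{sum}} = \sqrt{P}h_1(\mathcal{S} + \frac{h_2}{h_1}e^{j\alpha}\mathcal{S})$, and using $\frac{h_2}{h_1} = \gamma e^{j\theta}$ together with $\gamma e^{j\theta}e^{j\alpha} = \gamma e^{j(\theta+\alpha)}$ yields the new effective constellation $\mathcal{S}_{\text{eff}}' = \mathcal{S} + \gamma e^{j(\theta+\alpha)}\mathcal{S}$. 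This is identical to the effective constellation of \eqref{eqneff} evaluated at the fade state $(\gamma, \theta+\alpha)$, which is exactly the claim.

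I do not expect any genuine analytical obstacle; the only point deserving care is to confirm that the modulus $\gamma$ of the fade state is truly untouched. Because the rotation multiplies the fade coefficient only by the unit-modulus factor $e^{j\alpha}$, we have $|\gamma e^{j\theta}e^{j\alpha}| = \gamma$, so the operation moves the fade state along the arc $\gamma = \text{const}$ by exactly $\alpha$, sending $\theta$ to $\theta+\alpha$ while leaving the radial coordinate fixed. This clean separation between the phase (which absorbs $\alpha$) and the magnitude (which is invariant) is what lets a single relative rotation be used to steer the fade state out of a violation circle, and it is the one thing I would state explicitly in the write-up.
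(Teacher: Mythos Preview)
Your proposal is correct and follows essentially the same approach as the paper: represent the rotated User-2 constellation as $e^{j\alpha}\mathcal{S}$, substitute into the effective constellation $\mathcal{S}+\gamma e^{j\theta}(e^{j\alpha}\mathcal{S})=\mathcal{S}+\gamma e^{j(\theta+\alpha)}\mathcal{S}$, and read off the phase shift. Your additional remarks on power invariance and the modulus $\gamma$ being unchanged are sound elaborations but not required for the lemma as stated.
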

\begin{proof}
Let $\mathcal{S}$ be the constellation being used by both the users at the input. Now the User-2 rotates its constellation by angle $ \alpha $ in the anticlockwise direction, such that it can  now  be represented as $ e^{j \alpha} \mathcal{S} $. The effective constellation $ \mathcal{S}_{\text{eff}} $ can be written as 
\begin{align}
\nonumber
\mathcal{S}+ \gamma e^ {j \theta}\lbrace e^{j \alpha} \mathcal{S}\rbrace =\mathcal{S}+ \gamma e^ {j (\theta +\alpha)} \mathcal{S}.
\end{align} 
Hence, the fade state $(\gamma , \theta)$ is transformed to $(\gamma, \theta + \alpha)$ after rotation.
\end{proof}

The proposed strategy is thus, to rotate the signal set of User-2 whenever the fade state $(\gamma ,\theta)$ lies within any of the violation circles such that the transformed fade state lies outside the violation circles, in order to satisfy the minimum distance guarantee in $ \mathcal{S}_{\text{eff}} $. For fade states outside the violation circles no rotation is required. The destination which has complete CSI sends feedback of $\lceil \log _2(N_W+1)\rceil$ bits to the users to indicate in which one of the violation circles the fade state lies, or if it lies outside all of them.  

\subsection{Optimal Angle of Rotation for the $M$-PSK case} \label{sec:optimal}
\begin{definition}
An optimal rotation angle, for a violation circle with centre at singular fade state $(\gamma_i,\theta_i)$ , $ 1\leq i \leq N_W $ is that angle of rotation which maximizes the minimum distance in $\mathcal{S}_{\text{eff}}$ for the same transmit power, when fade state $ (\gamma ,\theta ) = (\gamma_i, \theta _i)$.
\end{definition}

It should be noted for non-singular fade states inside the violation circle, the minimum distance in $ \mathcal{S}_{\text{eff}} $ after rotation will be less than what it could have been if the fade state $ (\gamma ,\theta) $ was exactly known at the users. When both users use $M$-PSK constellations at the input, it is sufficient to concentrate on the violation circles centred at $(\gamma_i, \theta_i) \in \mathcal{H}_W $ , because the optimal rotation angles for these, is also optimal for other such circles centred at $(\gamma_i, \theta _i  +p \frac{2 \pi}{M})$,  where $ 1 \leq p \leq M-1$. This follows from the fact that the corresponding effective constellations are equivalent.

From Lemma \ref{relative}, rotation of the constellation of User-2 relative to User-1, results in effectively altering the phase of the fade state. Rotation thus results in moving the violation circle with centre at singular fade state $(\gamma _i, \theta _i) $, along a circular arc such that its centre always lie on the curve $\gamma=\gamma_i$ and within the wedge $ [0, \pi/M] $. In order to obtain the optimal angle of rotation for the violation circle centred at $(\gamma_i, \theta _i)$, it is required to calculate for fixed $\gamma=\gamma_i$ the value of phase $\theta = \theta_{i,opt} $, $ \theta \in [0,\pi/M] $ which maximizes the minimum distance in $\mathcal{S}_{\text{eff}}$. Thus after rotation the violation circle centred $ (\gamma_i ,\theta _i) $ is shifted such that its new centre is the point $ (\gamma_i, \theta_{i,opt}) $ on the $ (\Gamma, \Theta) $ plane. We refer to this rotated violation circle as the effective shifted circle. We now prove two lemmas to obtain the value $ \theta_{i,opt} $ when both users use $M$-PSK signal sets.

\begin{lemma}
\label{increasedi}
Let $ d_{\mathcal{C}_{k_i}}(\gamma, \theta) $ be the class distance function which reduces to zero at the singular fade state $ (\gamma_i, \theta_i) \in \mathcal{H}_W $. For a fixed $ \gamma_0 $ and $ (\gamma_0, \theta) $ lying within the wedge $ [0, \pi/M] $, the value of $ d_{\mathcal{C}_{k_i}}(\gamma_0,\theta) $ increases as the difference $ \vert \theta_i - \theta \vert $ increases.
\end{lemma}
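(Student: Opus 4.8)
The plan is to exploit the explicit form of the class distance function derived in equation~\eqref{eqn:di-eff}, namely
\begin{align}
\nonumber
d_{\mathcal{C}_{k_i}}(\gamma_0 , \theta) = \vert s_{2,i}-s'_{2,i}\vert \, \bigl\vert \gamma_0 e^{j \theta } - \gamma_i e ^{j \theta _i}\bigr\vert,
\end{align}
which holds because $ d_{\mathcal{C}_{k_i}} $ vanishes exactly at the singular fade state $ (\gamma_i,\theta_i) $ and hence must be proportional to the Euclidean distance from $ \gamma e^{j\theta} $ to $ \gamma_i e^{j\theta_i} $ in the complex plane. Since the leading coefficient $ \vert s_{2,i}-s'_{2,i}\vert $ is a positive constant independent of $ (\gamma,\theta) $, the monotonicity of $ d_{\mathcal{C}_{k_i}}(\gamma_0,\theta) $ in $ \theta $ is governed entirely by the factor $ \bigl\vert \gamma_0 e^{j\theta} - \gamma_i e^{j\theta_i}\bigr\vert $.

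The key step is to expand this squared distance between two points on concentric-or-not circles using the cosine rule. First I would write
\begin{align}
\nonumber
\bigl\vert \gamma_0 e^{j\theta} - \gamma_i e^{j\theta_i}\bigr\vert^2 = \gamma_0^2 + \gamma_i^2 - 2\gamma_0 \gamma_i \cos(\theta - \theta_i),
\end{align}
so that the first two terms are constants (as $ \gamma_0 $ and $ \gamma_i $ are both fixed) and all dependence on $ \theta $ sits in the $ -2\gamma_0\gamma_i\cos(\theta-\theta_i) $ term. Because $ \gamma_0,\gamma_i \geq 1 > 0 $, this term is strictly decreasing in $ \cos(\theta-\theta_i) $, and therefore the squared distance is strictly increasing in $ \vert \theta - \theta_i\vert $ as long as $ \vert\theta-\theta_i\vert $ stays within $ [0,\pi] $. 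Since $ d_{\mathcal{C}_{k_i}} $ is a nonnegative multiple of this distance, the same monotonicity transfers directly to $ d_{\mathcal{C}_{k_i}}(\gamma_0,\theta) $ itself.

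The one point requiring care — and the main obstacle — is the range restriction. The claimed monotonicity in $ \vert\theta-\theta_i\vert $ is only valid while $ \vert\theta-\theta_i\vert \leq \pi $, since $ \cos $ is monotone only on $ [0,\pi] $; beyond that the distance would start to decrease again. I would verify this is automatically satisfied here: both $ \theta_i $ and $ \theta $ lie in the wedge $ [0,\pi/M] $, so $ \vert\theta-\theta_i\vert \leq \pi/M \leq \pi $, placing us safely in the increasing regime of the cosine rule. Thus the restriction that $ (\gamma_0,\theta) $ lies within the wedge $ [0,\pi/M] $ is precisely what guarantees the conclusion, and the lemma follows by noting that increasing $ \vert\theta_i-\theta\vert $ strictly decreases $ \cos(\theta-\theta_i) $ and hence strictly increases $ d_{\mathcal{C}_{k_i}}(\gamma_0,\theta) $.
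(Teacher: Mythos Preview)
Your proposal is correct and follows essentially the same approach as the paper: both start from the factorization in~\eqref{eqn:di-eff}, expand $\vert \gamma_0 e^{j\theta}-\gamma_i e^{j\theta_i}\vert^2$ via the cosine rule, and conclude from the monotonicity of $\cos$ on $[0,\pi/M]$. The only cosmetic difference is that the paper first factors out $\gamma_0^2$ and works with the ratio $\gamma'=\gamma_i/\gamma_0$, whereas you expand directly; the arguments are otherwise identical.
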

\begin{proof}
From \eqref{eqn:di-eff}, we have 
{
\small
\begin{align}
\nonumber
d_{\mathcal{C}_{k_i}}^2(\gamma_0, \theta) &= \vert s_{2,i}-s'_{2,i}\vert ^2  \vert \gamma _0 e^{j\theta} - \gamma _i e^{j \theta _i}\vert ^2\\
\nonumber
&= \vert s_{2,i} - s'_{2,i}\vert ^2 \gamma _0 ^2 \vert  1 -\gamma' e^{j(\theta_i - \theta )}\vert ^2 \,\text{where}\, \gamma'=\gamma_i /\gamma _0 >0\\
\nonumber
&= \vert s_{2,i} - s'_{2,i}\vert ^2 \gamma _0 ^2 \left\lbrace 1+ {\gamma '}^2 - 2\gamma ' \cos \vert \theta _i - \theta \vert \right\rbrace. 
\end{align}
}
As $ \cos \phi $ is a decreasing function of $\phi$, $ 0\leq \phi \leq \pi/M $, $ d_{\mathcal{C}_{k_i}}^2(\gamma_0, \theta) $ increases as $ \vert \theta_i - \theta \vert $ increases. This proves the lemma.  
\end{proof}
\begin{lemma}
\label{lem:optimalboundary}
Let $(\gamma_i, \theta_i)$ be a singular fade state. Let $\mathcal{R}_W(\mathcal{C}_{k_a})$ and $\mathcal{R}_W(\mathcal{C}_{k_b})$ be the regions surrounding the singular fade states $(\gamma_a, \theta_a)$ and $(\gamma_b, \theta_b)$ respectively. Consider the arc traced by the point $(\gamma_i, \theta)$ that lies within the wedge $[0,\pi/M]$ as $\theta$ varies in the direction to move away from the singular fade state $(\gamma_i, \theta_i)$. Let the region $\mathcal{R}(\mathcal{C}_{k_a})$ is encountered before $\mathcal{R}(\mathcal{C}_{k_b})$ as $\theta$ varies. We have the following:
\begin{itemize}
\item[(i)]
The minimum distance in $\mathcal{S}_{\text{eff}}$ is maximized at one of the points of intersection of this arc and the boundaries between the regions $\mathcal{R}(\mathcal{C}_{k_a})$ and $\mathcal{R}(\mathcal{C}_{k_b})$.
\item[(ii)]
Among all the points in (i), those that lie on the boundary between $\mathcal{R}(\mathcal{C}_{k_a})$ and $\mathcal{R}(\mathcal{C}_{k_b})$ with $\theta_i = \theta_a = \theta_b$ can never correspond to the maximum value. 
\end{itemize}
\end{lemma}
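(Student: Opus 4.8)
The plan is to study the single–variable function $d_{min}(\gamma_i,\theta)$ along the arc $\gamma=\gamma_i$ as $\theta$ moves away from $\theta_i$, and to show it is \emph{unimodal} (it rises, then falls), so that its maximum is forced onto a region boundary. By Lemma \ref{periodic} every singular fade state in the wedge lies on $\theta=0$ or $\theta=\pi/M$, so $\theta_i$ is one endpoint of $[0,\pi/M]$; let $\theta'$ denote the other endpoint. By Definition \ref{rd}, whenever $(\gamma_i,\theta)\in\mathcal{R}(\mathcal{C}_{k_j})$ we have $d_{min}(\gamma_i,\theta)=d_{\mathcal{C}_{k_j}}(\gamma_i,\theta)$, so tracking $d_{min}$ along the arc amounts to tracking the ``active'' class distance function. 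The crucial monotonicity comes from Lemma \ref{increasedi} together with \eqref{eqn:di-eff}: as $\theta$ moves away from $\theta_i$ toward $\theta'$ inside $[0,\pi/M]$, the quantity $|\theta-\theta_i|$ increases while $|\theta-\theta'|$ decreases, so $d_{\mathcal{C}_{k_j}}(\gamma_i,\theta)$ is strictly increasing in $\theta$ when $\theta_j=\theta_i$ and strictly decreasing when $\theta_j=\theta'$. Every class distance function that can realise the minimum therefore falls into exactly one of these two monotone types, determined by which of the two lines carries its singular fade state.

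For part (i), I would set $F(\theta)=\min\{d_{\mathcal{C}_{k_j}}(\gamma_i,\theta):\theta_j=\theta_i\}$ and $G(\theta)=\min\{d_{\mathcal{C}_{k_j}}(\gamma_i,\theta):\theta_j=\theta'\}$, so that along the arc $d_{min}(\gamma_i,\theta)=\min\{F(\theta),G(\theta),d_{min}(\mathcal{S})\}$, the last term being the constant upper cap from Lemma \ref{lem:uppermin}. Being a pointwise minimum of increasing functions, $F$ is increasing; likewise $G$ is decreasing; hence $F-G$ is increasing and changes sign at most once. This makes $\min\{F,G\}$ rise while it equals $F$ and fall once it equals $G$, so $d_{min}$ attains its maximum exactly where $F=G$, i.e.\ where the active function switches from a same–line class $\mathcal{C}_{k_a}$ (with $\theta_a=\theta_i$) to an opposite–line class $\mathcal{C}_{k_b}$ (with $\theta_b=\theta'$). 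By Definition \ref{rd} this switching point is precisely where the arc crosses the boundary between $\mathcal{R}(\mathcal{C}_{k_a})$ and $\mathcal{R}(\mathcal{C}_{k_b})$, which is claim (i); if instead the cap $d_{min}(\mathcal{S})$ is reached first, the maximum is attained on a plateau whose endpoints are again such boundary crossings.

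For part (ii), suppose a candidate boundary point $\theta^\ast$ from (i) has $\theta_i=\theta_a=\theta_b$. Then both $d_{\mathcal{C}_{k_a}}$ and $d_{\mathcal{C}_{k_b}}$ are of the same (increasing) type by the classification above. At $\theta^\ast$ the two are equal and realise $d_{min}$, but immediately past $\theta^\ast$ the active function is $d_{\mathcal{C}_{k_b}}$, which is still strictly increasing, so $d_{min}(\gamma_i,\theta)>d_{min}(\gamma_i,\theta^\ast)$ for $\theta$ slightly beyond $\theta^\ast$. Hence $\theta^\ast$ is not even a local maximum and cannot be the global maximum, which proves (ii).

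I expect the main obstacle to be the rigorous justification of unimodality and of the ``active function switches only at region boundaries'' correspondence: one must argue that $d_{min}$ really is the pointwise minimum of the finitely many monotone class functions (together with the constant cap), so that no spurious local maxima arise in the interior of any region, and one must dispose cleanly of the degenerate case where the cap $d_{min}(\mathcal{S})$ is attained on a whole interval rather than at a single crossover. The monotonicity supplied by Lemma \ref{increasedi} is exactly what rules out interior maxima and collapses the whole analysis to the single sign change of $F-G$.
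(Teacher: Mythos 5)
Your proof is correct and takes essentially the same route as the paper: both rest on Lemma \ref{increasedi} to classify the candidate class distance functions along the arc $\gamma=\gamma_i$ as increasing or decreasing according to whether their singular fade state lies on the line $\theta=\theta_i$ or on the opposite line of the wedge, which forces the maximum of $d_{min}(\gamma_i,\theta)$ onto a boundary where an increasing-type region meets a decreasing-type one (part (i)), and both prove part (ii) by observing that $d_{min}$ continues to increase through a boundary between two regions whose singular fade states share the line $\theta_i=\theta_a=\theta_b$, so such a crossing cannot even be a local maximum. Your $F$/$G$ unimodality packaging, and the explicit handling of the plateau where the constant cap $d_{min}(\mathcal{S})$ is active, is a slightly more careful rendering of the paper's region-by-region monotonicity argument, but the underlying idea is identical.
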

\begin{proof}
The proof for part (i) is as follows. Let $\mathbf{A}$ and $\mathbf{B}$ be the first and the second points of intersection of the arc with the boundary of the region $\mathcal{R}_W(\mathcal{C}_{k_a})$. From Section II-B, the value of class distance function $d_{\mathcal{C}_{k_a}}(\gamma, \theta)$ gives the minimum distance in $\mathcal{S}_{\text{eff}}$ when $ (\gamma, \theta) \in \mathcal{R}_W({\mathcal{C}_{k_a}})$.  Now if $\theta_a = \theta_i$, then from Lemma \ref{increasedi}, $d_{\mathcal{C}_{k_a}}(\gamma_i, \theta)$ increases as $(\gamma_i, \theta)$ moves from $\mathbf{A}$ to $\mathbf{B}$. Likewise, if $\theta_a \neq \theta_i$, then again from Lemma \ref{increasedi}, $d_{\mathcal{C}_{k_a}}(\gamma_i, \theta)$ decreases as one moves from $\mathbf{A}$ to $\mathbf{B}$. Thus the minimum distance in $\mathcal{S}_{\text{eff}}$ can never be maximum for $(\gamma_i, \theta)$ lying inside the regions $\mathcal{R}_W(\mathcal{C}_{k_a})$. It can only be maximized at the points of intersection of the arc with the boundary of the region.
 
To prove the second part of the lemma, we assume $\theta_i= \pi/M$. Consider the scenario shown in Fig. 8. To prove part (ii), we need to show that the minimum distance in $\mathcal{S}_{\text{eff}}$ can never be maximum at the point $\mathbf{B}$. In the region $\mathcal{R}_W(\mathcal{C}_{k_a})$ the value of class distance function $d_{\mathcal{C}_{k_a}}(\gamma, \theta)$ gives the minimum distance in $\mathcal{S}_{\text{eff}}$. As $(\gamma_i,\theta)$ moves from $\mathbf{A}$ to $\mathbf{B}$, from Lemma \ref{increasedi}, the minimum distance in $\mathcal{S}_{\text{eff}}$, \emph{i.e.} the value of $d_{\mathcal{C}_{k_a}}(\gamma_i, \theta)$, increases  since $\vert \theta_a- \theta \vert$ increases. At $\mathbf{B}$, we have $d_{\mathcal{C}_{k_a}}(\gamma_i, \theta)= d_{\mathcal{C}_{k_b}}(\gamma_i, \theta)$. Beyond $\mathbf{B}$, in the region $\mathcal{R}_W(\mathcal{C}_{k_b})$, the class distance function $d_{\mathcal{C}_{k_b}}(\gamma, \theta)$ gives the minimum distance in $\mathcal{S}_{\text{eff}}$. As $(\gamma_i, \theta)$ moves from $\mathbf{B}$ to $\mathbf{C}$, $\vert \theta_b - \theta \vert $ increases, thus from Lemma \ref{increasedi}, the minimum distance in $\mathcal{S}_{\text{eff}}$, \emph{i.e.} the value of $d_{\mathcal{C}_{k_b}}(\gamma_i, \theta)$ continues to increase. So $\mathbf{B}$ can never correspond to the point where the minimum distance in $\mathcal{S}_{\text{eff}}$ is maximized.
The proof for the case  when $\theta_i=0$  is exactly similar to the above proof.
This completes the proof.
\end{proof}
\begin{figure} [t]
\centering
\includegraphics[totalheight=1.9in,width=3.2in]{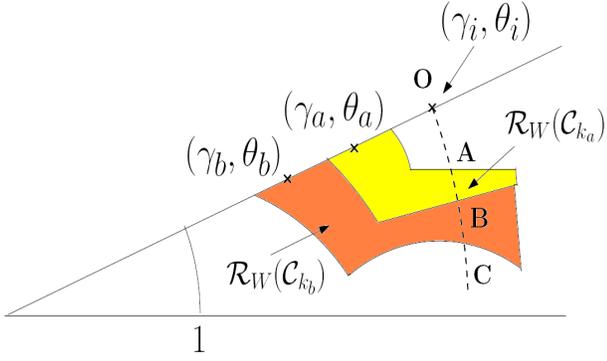}
\label{pointersect}
\caption{Diagram illustrates the variation of the minimum distance in $\mathcal{S}_{\text{eff}}$, for fixed $\gamma_i$ on varying $\theta$}
\end{figure}

The procedure to obtain the optimal phase $ \theta_{i,opt} $ of the fade state, for the violation circle centred at singular fade state $ (\gamma_i, \theta _i) \in \mathcal{H}_W $ is stated as follows:
\begin{itemize}
\item[Step 1]
Find the points of intersections of the arc $ \gamma = \gamma_i $, $ \theta \in [0, \pi/M] $, with the boundaries that satisfy the conditions mentioned in Lemma \ref{lem:optimalboundary}.
\item[Step 2]
If there is only one such point of intersection, say with the boundary between the regions $\mathcal{R}_W(\mathcal{C}_{k_a})$ and $\mathcal{R}_W(\mathcal{C}_{k_b})$,  $ \theta_{i,opt} $ is obtained by solving the equation $ d_{\mathcal{C}_{k_a}}^2(\gamma,\theta)=d_{\mathcal{C}_{k_b}}^2(\gamma,\theta) \vert _{\gamma=\gamma_i} $. On the other hand, if there are $L$ such points of intersections, say with boundaries between regions $\mathcal{R}_W(\mathcal{C}_{k_{a,l}})$ and $\mathcal{R}_W(\mathcal{C}_{k_{b,l}})$, $1\leq l \leq L$, calculate the phase of each of these points of intersection, $ \theta_{l,\text{intersect}} $, $ 1\leq l \leq L $ by solving the equation $ d_{\mathcal{C}_{k_{a,l}}}^2(\gamma,\theta)=d_{\mathcal{C}_{k_{b,l}}}^2(\gamma, \theta) \vert _{\gamma=\gamma_i}$. Then compute  the minimum distances in $ \mathcal{S}_{\text{eff}} $ for the fade state corresponding to the point of intersection $ (\gamma_i , \theta_{l,\text{intersect}}) $ as  $ d_{min}(\gamma_i, \theta_{l,\text{intersect}})= d_{\mathcal{C}_{k_{a,l}}}^2 (\gamma, \theta)\vert_{\gamma=\gamma_i, \theta= \theta_{l,\text{intersect}}} $. Choose 
\begin{align}
\nonumber
l'&= \text{arg}\,\max_{1\leq l \leq L} d_{min}(\gamma_i , \theta_{l,\text{intersect}}).
\end{align}
Then we have, 
\begin{align}
\nonumber
\theta_{i,opt}&= \theta_{l',\text{intersect}}.
\end{align}
\end{itemize}

The optimal rotation angles for the violation circle with centres at $ (\gamma_i,\theta_i) $, can now be easily calculated from $ \theta_{i,opt} $. The optimal rotation angle, $ \alpha_{i,opt} $, for the violation circle centred at $ (\gamma_i, \theta_i) $ is that rotation angle that transforms the fade state from $ (\gamma_i, \theta_i) $ to $ (\gamma_i,\theta_{i,opt})  $. (See Lemma \ref{relative}.)

For violation circles with centre at $ (\gamma_i ,\theta_i) $ , $ 1\leq i \leq N_W $, the optimal rotation angles for the User-2 are as follows:  
\begin{itemize}
\item
If $ \theta_i = \pi/M $, from Lemma \ref{relative}, the optimal rotation is $ \alpha_{i,opt} = \pi/M - \theta_{i,opt} $ in a clockwise direction. 
\item
If $ \theta_i = 0  $, from Lemma \ref{relative}, the optimal rotation is $ \alpha_{\text{i,opt}} = \theta_{\text{i,opt}} $ in an anticlockwise direction.
\end{itemize}
 
\begin{example}\textit{Optimal angles for QPSK signal sets}
When both users use QPSK constellations at input, the channel quantization is shown in Fig. \ref{fig:qpsksector}. The optimal rotation angle are calculated as shown below. These are shown in Fig. \ref{fig:optimal}.

\begin{figure}[t]
\centering
\includegraphics[totalheight=2.5in,width=2.5in]{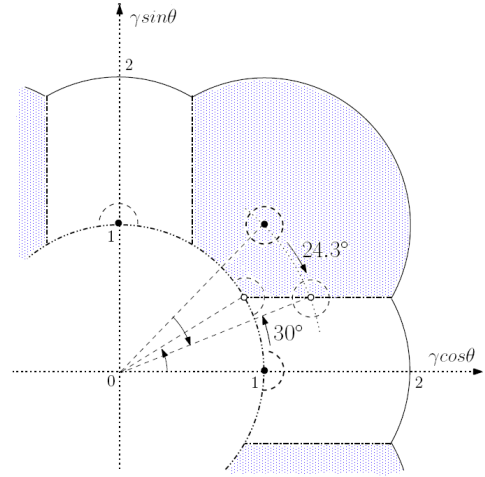}
\caption{Optimal rotation angles for the QPSK case}	
\label{fig:optimal}	
\end{figure}
\begin{itemize}
\item
For violation circle centred at $ (\sqrt{2},\frac{\pi}{4}) $, the optimal phase  $ \theta_{1,opt}  $  is the phase of point of intersection of the arc $ \gamma = \sqrt{2} $ and pairwise boundary $ d_{\mathcal{C}_{k_1}}^2(\gamma,\theta)=d_{\mathcal{C}_{k_2}}^2(\gamma, \theta) $ \emph{i.e.} $ \gamma \sin \theta = 0.5 $. Therefore, $ \theta_{1,opt} = \sin ^{-1} (\frac{1}{2\sqrt{2}}) \approx 20.7 ^\circ$. Thus the optimal rotation angle for the constellation of User-2 relative to User-1 is $ \alpha_{1,opt} = 45^\circ - 20.7^\circ = 24.3 ^\circ $ in a clockwise direction.
\item
For violation circle centred at $ (1,0) $, the optimal phase corresponds to  the point of intersection of the arc $ \gamma =1 $ and boundary $ d_{\mathcal{C}_{k_1}}^2(\gamma, \theta) = d_{\mathcal{C}_{k_2}}^2(\gamma,\theta) $, \emph{i.e.} $ \gamma \sin \theta = 0.5 $. Thus $ \theta _{2,opt} = \sin ^{-1}(0.5) = 30^\circ $. Thus the optimal rotation angle for User-2 is $ \alpha_{2,opt} = 30^\circ $ in an anticlockwise direction relative to User-1. 
\end{itemize}
\end{example}
\begin{example}\textit{Optimal Angles for 8-PSK signal sets}
When both users are using 8-PSK constellations at the input, the channel quantization is obtained as shown in Fig. \ref{fig:8psksector}. The optimal angles of rotation for User-2 relative to User-1, $ \alpha _{i,opt} $ can be calculated using the same technique for all the violation circles centred at the singular fade states $ (\gamma_i,\theta_i) $, $ 1\leq i \leq 7 $. For each $ (\gamma_i, \theta_i) $, $ 1\leq i \leq 7 $, the boundaries corresponding to the optimal phase, $ \theta_{i,opt} $ and $ \alpha _{i,opt} $ are tabulated in Table \ref{table2}. The letters (C) or (A) in the column corresponding to $ \alpha_{i,opt} $ indicates the direction of rotation as clockwise and anticlockwise respectively.
\begin{table*}
\caption{Optimal angles of rotation for 8-PSK case}
\label{table2}
\begin{center}
\begin{tabular}{|c|l|c|l|l|}
\hline
$ i $ & $ (\gamma_i, \theta_i) $ & Boundary & $ \theta _{i,opt} $ & $ \alpha _{i,opt} $ (C/A)  \\ \hline
$ 1 $ & $(1,0)$  & $ d^2_{\mathcal{C}_{k_1}}(\gamma, \theta)= d^2_{\mathcal{C}_{k_4}}(\gamma, \theta) $ & $ \tan ^{-1} \left( \frac{1}{2}\right)- \cos ^{-1} \left( \frac{\gamma^2_1 + \frac{1}{2}+\frac{1}{\sqrt{2}}}{\sqrt{5}\gamma_1}\right) \approx 17.3^\circ $ & $ 17.3  $ (A) \\ \hline
$ 2 $ &  $(\sqrt{2},0)$ & $ d^2_{\mathcal{C}_{k_1}}(\gamma, \theta)= d^2_{\mathcal{C}_{k_5}}(\gamma, \theta) $ & $ \tan ^{-1} \left( \frac{1}{2\sqrt{2}-1}\right)- \cos ^{-1} \left( \frac{\gamma^2_2 + 2}{2 \gamma _2 \sqrt{5- 2 \sqrt{2}}}\right) \approx 12.4^ \circ $  & $ 12.4^\circ $ (A)  \\ \hline
$ 3 $ & $(\sqrt{2}+1 ,0)$ & $ d^2_{\mathcal{C}_{k_3}}(\gamma, \theta)= d^2_{\mathcal{C}_{k_7}}(\gamma, \theta) $ & $ \sin ^{-1}\left(\frac{1}{2 \gamma_3} \right) \approx 12^\circ$ & $ 12^\circ $ (A)\\ \hline
$ 4 $ & $\left(\sqrt{4-2\sqrt{2}},\frac{\pi}{8}\right)$ & $ d^2_{\mathcal{C}_{k_1}}(\gamma, \theta)= d^2_{\mathcal{C}_{k_4}}(\gamma, \theta) $ & $ \tan ^{-1} \left( \frac{1}{2}\right)- \cos ^{-1} \left( \frac{\gamma^2_4 + \frac{1}{2}+\frac{1}{\sqrt{2}}}{\sqrt{5}\gamma_4}\right) \approx 15.9^\circ$ & $ 6.6^\circ $ (C)\\ \hline
$ 5 $ & $\left(\sqrt{1+\frac{1}{\sqrt{2}}} ,\frac{\pi}{8}\right)$ & $ d^2_{\mathcal{C}_{k_1}}(\gamma, \theta)= d^2_{\mathcal{C}_{k_5}}(\gamma, \theta) $ & $ \tan ^{-1} \left( \frac{1}{2\sqrt{2}-1}\right)- \cos ^{-1} \left( \frac{\gamma^2_5 + 2}{2 \gamma_5 \sqrt{5-2\sqrt{2}}}\right) \approx 13^\circ$ & $ 9.5^\circ $  (C)\\ \hline
$ 6 $ & $\left(\sqrt{2+\sqrt{2}} ,\frac{\pi}{8}\right)$ & $ d^2_{\mathcal{C}_{k_3}}(\gamma, \theta)= d^2_{\mathcal{C}_{k_6}}(\gamma, \theta) $ & $ \cos ^{-1} \left(\frac{1+\sqrt{2}}{2\gamma_6}\right)-\frac{\pi}{4} \approx 4.2^\circ$  & $ 18.3^\circ $ (C)\\ \hline
$ 7 $ & $\left(\sqrt{4+2\sqrt{2}},\frac{\pi}{8}\right)$ & $ d^2_{\mathcal{C}_{k_3}}(\gamma, \theta)= d^2_{\mathcal{C}_{k_7}}(\gamma, \theta) $ & $ \sin^{-1} \left(\frac{1}{2\gamma_7} \right) \approx 11^\circ$ & $ 11.5^\circ $ (C)\\ \hline
\end{tabular}
\end{center}
\end{table*}
\end{example}

\subsection{Upper Bound on $\delta$} \label{sec:upper}

In this subsection we obtain an upper bound on $ \delta $ for which the proposed rotation scheme can be employed. It is necessary that the violation circle corresponding to any of the singular fade states must not overlap with any of the effective shifted circles, otherwise the minimum distance guarantee will be violated. This is illustrated in Fig. \ref{violation_delta}. It is clear from the figure, to avoid the overlap it is necessary that the distance between the centre of each of the effective shifted circle $ (\gamma_i,\theta_{i,opt}) $, $ 1 \leq i \leq N_W $ and the singular fade states $ (\gamma_j, \theta_j) $, $ 1\leq j \leq N_W $ should be at least equal to sum of the radius of the shifted circle and the radius of the violation circles centred at $ (\gamma_j,\theta_j) $. It is required, 
\noindent for each $i$, $1\leq i\leq N_W $ ,
\begin{align}
\label{eqn:upperbound}
\rho(\gamma_i,\theta_i)+\rho(\gamma_j,\theta_j) &\leq d_{(\gamma _i, \theta _{i,opt}) \leftrightarrow (\gamma_j, \theta_j)} \\
\nonumber
 &\text{for all }j\,,1\leq j\leq N_W, 
\end{align}  
where $d_{(\gamma _i, \theta _{i,opt}) \leftrightarrow (\gamma_j, \theta_j)}$ is the Euclidean distance between the points $(\gamma_j, \theta_j)$ and $(\gamma_i, \theta_{i,opt})$ in the $(\Gamma, \Theta)$ plane.
Since $ \rho (\gamma_k, \theta_k) $, $ 1\leq k \leq N_W $, is a function of $ \delta $, \eqref{eqn:upperbound} provides an upper bound on $ \delta $. 
\begin{example}
When both users use QPSK constellations at the input, both the violation circles centred at $ (1,0^\circ) $ and $ (\sqrt{2},45^\circ) $ has radius $ \delta /\sqrt{2} $ \emph{i.e.} $\rho(1,0^\circ) = \rho (\sqrt{2},45^\circ) = \frac{\delta}{\sqrt{2}}$. Now,
\begin{align}
\nonumber
&d_{(\sqrt{2},45^\circ)\leftrightarrow (\sqrt{2},20.7^\circ) }=d_{(1,0^\circ)\leftrightarrow (\sqrt{2},20.7^\circ) } \approx 0.5936\\
\nonumber
&d_{(\sqrt{2},45^\circ) \leftrightarrow (1,30^\circ) }= d_{(1,0^\circ) \leftrightarrow (1,30^\circ) } \approx  0.5176.
\end{align}
To avoid overlap, from \eqref{eqn:upperbound}, we have 
\begin{align}
\nonumber
&2(\frac{\delta}{\sqrt{2}}) \leq \min \lbrace 0.5936, 0.5176 \rbrace \\
\nonumber
& \therefore \delta \leq 0.365 \approx \delta_{max}.
\end{align}
\end{example}

For $ \delta > \delta _{max} $ there always exist some fade states for which the minimum distance in $ \mathcal{S}_{\text{eff}} $ cannot be increased beyond $ \delta $ using the proposed scheme. For example, in Fig. \ref{violation_delta}, the fade state corresponding to the point $ P $ is transferred to $ P' $ after rotation. But $ P' $ still lies within the violation circle corresponding to singular fade state $ (\sqrt{2},\frac{\pi}{4}) $, thus minimum distance guarantee is violated.

\begin{figure}[t]
\centering
\includegraphics[totalheight=3in,width=3in]{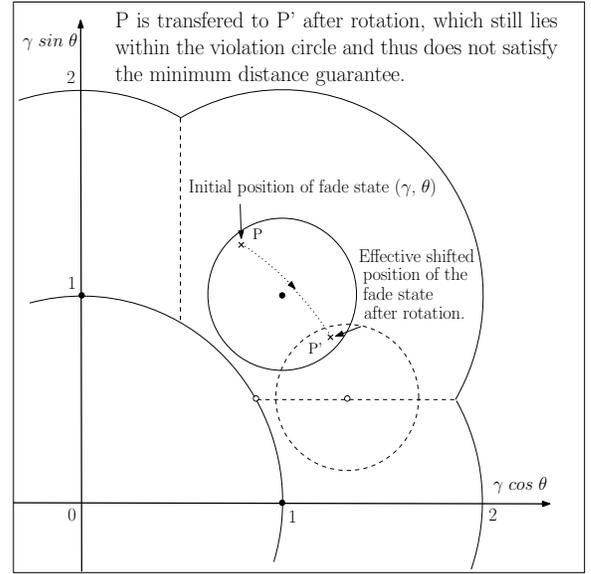}
\caption{Diagram illustrating the necessity for an upper bound on $\delta$ }	
\label{violation_delta}	
\end{figure}

\section{SIMULATION RESULTS AND DISCUSSIONS} \label{sec:results}

All through the previous section we have assumed that the ratio $\frac{h_2}{h_1}$ is calculated at the destination. But even if is not so, \emph{i.e.} actually the ratio $\frac{h_1}{h_2}$ is used for channel quantization at the destination, then exactly the same scheme would work except that instead of rotating the constellation of User-2 we have to rotate the signal set of User-1. The optimal angles of rotation will still be same as calculated before. The feedback that the destination sends back to the users indicates if the fade state $ (\gamma,\theta) $ lies in any of the violation circles or not, and if it does, then identifies in which one of the $ N_W $ violation circles does it lie in. It also needs to indicate which one among the  two ratios is calculated for fade state quantization at the destination. Thus the total feedback overhead is  $\lceil \log_2 (N_W +1)\rceil +1$ bits. For example, the feedback overhead for the QPSK and 8-PSK case are 3 and 4 bits respectively. This feedback overhead is very nominal. 

\begin{figure}[t]
\centering
\includegraphics[totalheight=2.4in,width=3.1in]{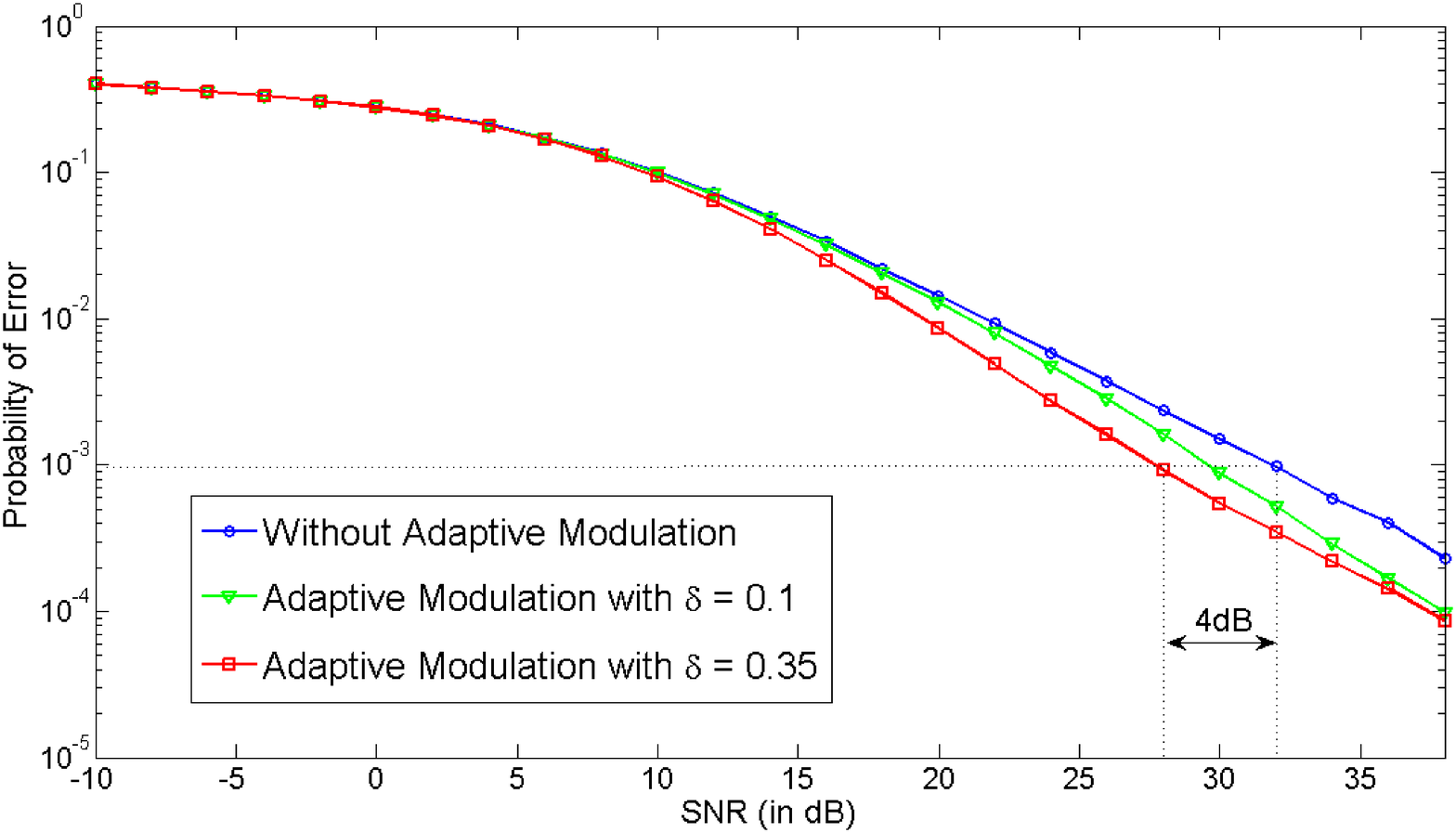}
\caption{Probability of error vs SNR plot, when both users use QPSK signal sets}	
\label{PeSNR}	
\end{figure}

The system is simulated for the case when both users use QPSK signal sets at the input.
The probability of error is  plotted against SNR in Fig. \ref{PeSNR} both without and with adaptive modulation for different values of $ \delta $. The gains obtained increases on increasing $ \delta $ as expected. For a $ P_e= 10^{-3} $ there is a 4dB gain obtained with adaptive modulation with $ \delta=0.35 $, as shown in the figure. 

\section{CONCLUSIONS}

In this paper, we have proposed a modulation scheme for the two-user MAC  with fading which adapts according to the fade states. For this purpose we have obtained a quantization of all possible fade states based on a minimum distance criteria when both users use $M$-PSK constellations at input. We have identified the regions, called violation circles, such that when the fade state lies in them the minimum distance requirement in $\mathcal{S}_{\text{eff}}$  is violated. The quantized fade state knowledge is fed back to the  users using just $\lceil \log_2 (N_W +1)\rceil +1 $ bits. Based on this quantized feedback, one of the users rotates it's constellation to effectively overcome the `bad channel conditions'.  
We have shown the extent to which the error performance of this proposed scheme is better than the conventional scheme without adaptation. The case when both users use QAM constellations at input has not been considered in this paper, is a natural topic for future work. Other cases with more than two users communicating with a single destination is also an interesting direction for future work.
  
\section*{Acknowledgement}
This work was supported  partly by the DRDO-IISc program on Advanced Research in Mathematical Engineering through a research grant as well as the INAE Chair Professorship grant to B.~S.~Rajan.

\end{document}